\documentclass[11pt]{article}
\usepackage{amsmath,amssymb,amsthm,mathrsfs}
\usepackage{graphicx}
\usepackage{xcolor}
\usepackage{booktabs}
\usepackage{multirow}
\usepackage[authoryear,round]{natbib}
\usepackage{hyperref}
\hypersetup{
  colorlinks=true,
  linkcolor=blue,
  citecolor=blue,
  urlcolor=blue
}

\newtheorem{structure}{Structure}

\theoremstyle{definition}

\newtheorem{proposition}{Proposition}[section]

\newtheorem{example}{Example}[section]

\newtheorem{remark}{Remark}[section]

\newcommand{\EE}{\mathbb{E}}
\newcommand{\RR}{\mathbb{R}}

\newcommand{\VV}{\mathbb{V}}

\usepackage{relsize}
\newcommand{\T}{{\mathsmaller {\rm T}}}
\newcommand{\ind}{{\rm I\hspace{-2.2mm} 1}}
\newcommand{\cond}{{\rm \,| \,}}

\newcommand{\argmax}{\mathop{\mathrm{argmax}}}

\title{Induced replication and the assessment of models}
\author{
  Heather Battey\thanks{Department of Mathematics, Imperial College London, UK, \texttt{h.battey@imperial.ac.uk}.} \quad
  and \quad
  Nancy Reid\thanks{Department of Statistical Sciences, University of Toronto, Canada.}
}
\date{\today}

\begin{document}

\maketitle

\begin{abstract}
We study the assessment of semiparametric and other highly-parametrised models from the perspective of foundational principles of parametric statistical inference. In doing so, we highlight the possibility of avoiding the usual semiparametric considerations, which typically require estimation of nuisance components through kernel smoothing or basis expansion, with the associated difficulties of tuning-parameter choice that blur the distinction between estimation and model assessment. A key aspect is the inducement of replication under the postulated model. This can be cast in terms of some non-standard inferential separations, in the vein of Fisherian ancillarity/co-ancillarity and sufficiency/co-sufficiency separations, allowing the replacement of out-of-sample prediction error as a criterion for semiparametric model assessment by a type of within-sample prediction error. Framed in this light are new methodological contributions in multiple example settings, including model assessment for the proportional hazards model, for a time-dependent Poisson process with semiparametric intensity function, and for  matched-pair and two-group examples. Also subsumed within the framework is a post-reduction inference approach to the construction of confidence sets of sparse regression models.  Numerical work confirms recovery of nominal error rates under the postulated model and high sensitivity to departures in the direction of semiparametric alternatives. We conclude by emphasising open challenges and unifying perspectives.

\bigskip

\noindent \emph{Some key words:} co-sufficiency; exchangeability; foundations of model inference; inferential separations; model adequacy; post-reduction inference.
\end{abstract}

\section{Introduction}\label{secIntro}

Two widely used approaches to assessing regression models, semiparametric or otherwise, are to substitute unknown regression parameters by estimates and check residuals for any anomalous behaviour, or to assess the predictive performance of the fitted model. The visually compelling but sometimes informal  approaches based on residuals, through their connection to co-sufficiency in parametric models, can be viewed as a way of evaluating predictive accuracy within sample.

If a statistic $S=s(Y)$ constructed from observations $Y=(Y_1,\ldots,Y_n)$ is sufficient for a scalar or vector parameter $\theta$, then it contains all the information in the data relevant for inference on $\theta$, potentially leaving information remaining in $Y$ for assessment of the model, an observation implicit in \citeauthor{Fisher1950}'s (\citeyear{Fisher1950}) model assessment for a Poisson example. Specifically if the observed value $y^o$ is extreme when calibrated against the conditional distribution of $Y$ given the observed value $S=s$, then this provides evidence against the model \citep[][p.~29]{BNC1994}. We call this residual information after conditioning \emph{co-sufficient} in line with the terminology used in previous work. The idea, although conceptually appealing, typically does not translate to a convenient statistical procedure, as the conditional distribution is multidimensional, often constrained to a manifold of inexplicit form, and usually does not have a convenient analytic expression. Simulation to calculate extremal regions is not straightforward either, as simple bootstrap draws from the data violate the constraint $S=s$. Progress was made by \citet{Engen,Lindqvist2003, Lindqvist2005,Lockhart} who appear to have been the first to attempt to operationalise co-sufficient sampling in any level of generality, and \citet{BJ2022} who substantially advanced the methodology and associated theory. See also \citet{Battey2024} for some relevant geometric perspectives and \citet{BRT} for an analytic approach designed for a particular setting, outlined in \S \ref{secSynthetic}. The purpose of the present paper is to explore other inferential structures, some of which being generalisations of the sufficiency/co-sufficiency separation, others involving a form of ancillarity/co-ancillarity separation, with a view to opening new directions for research on model assessment in semiparametric and highly parametrised models.

If a model contains a genuinely infinite-dimensional component, there is no hope for either first-order inference or model assessment from $n$ observations. In practice, however, semiparametric models are not genuinely infinite dimensional. Instead, smoothness assumptions implicitly or explicitly constrain the effective parameter space to dimension smaller than $n$, often achieved by truncating an expansion in basis functions, by regularisation, or via kernel smoothing. The associated tuning-parameter selection for such estimators is typically based on cross-validation, blurring the distinction between estimation and model assessment. The present work is an attempt to reconcile inference for semiparametric and other highly-parametrised models with the Fisherian parametric foundations by illustrating the feasibility and merits of an approach to model assessment that evades estimation of the high- or infinite-dimensional nuisance component.

The following simple example exposes two important points: that the need for smoothness or other assumptions on infinite-dimensional nuisance parameters can sometimes be removed, and that model assessment can be performed without estimating nuisance parameters, and without an additional sample to assess out of sample prediction error.

\begin{example}\label{exIntro}
	\citep[][\S 5]{BC2020}. Let $(Y_{j1}, Y_{j0})$ be outcomes on treated and untreated units in the $j$th of $m$ matched pairs of individuals, e.g.~identical twins. Because treatment is randomised, these outcomes are independent, but inference on the treatment parameter $\psi$, and assessment of the model, is superficially challenging as every pair contributes a nuisance parameter $\gamma_j$, so that the pairs are not identically distributed, and the model has $m+1$ unknown parameters. This formulation of the matched pair problem, although highly parametrised, makes fewer assumptions than a standard semiparametric model in covariates $x\in\RR^p$, which would model the nuisance parameters as $\gamma_j = h(x_j)$ for some unknown function $h: \RR^p \rightarrow \Gamma$, where the parameter space $\Gamma$ is usually $\RR$ or $\RR^+$.

	Suppose now that $(Y_{j1}, Y_{j0})$ are modelled as independent and exponentially distributed with rates $\gamma_j\psi$, $\gamma_j$ respectively. If the true distribution belongs to the model and has true parameter value $\psi^*$, then
	$Y_{j0} +  Y_{j1}\psi^*=:S_{j}(\psi^*)$ is sufficient for $\gamma_j$ and has density function 
	\begin{equation}\label{gamma2lambda}
	f_{S_{j}(\psi^*)}(s) = \gamma_j^{2} s \exp(-\gamma_j s),
	\end{equation}
	i.e., $S_{j}(\psi^*)$ is gamma distributed with shape parameter 2 and rate parameter $\gamma_j$. The conditional density of $Y_{j1}$ at $y_{j1}$, given $S_{j}(\psi^*)=s_j(\psi^*)$, is $\psi^*/s_j(\psi^*)$ uniformly in $y_{j1}$, showing that $Y_{j1}$ is conditionally uniformly distributed between $0$ and  $s_j(\psi^*)/\psi^*$. Equivalently $U_{j}(\psi^*):=Y_{j1}\psi^*/s_{j}(\psi^*)$ has a standard uniform distribution for all $j=1,\ldots,m$. The above conclusions are invalidated if $\psi^*$ is replaced by any other value $\psi_0$, or if the model is wrong, suggesting a route to assessment of model adequacy via the empirical behaviour of $U_{1}(\psi_0), \ldots, U_m(\psi_0)$ for a postulated value $\psi_0$, which can be viewed as analogues of residuals.
	
	Since some postulated values $\psi_0$ may produce a distribution for $U_1(\psi_0),\ldots, U_m(\psi_0)$ that is hard to distinguish from a standard uniform sample even when the model is violated, it is sensible for sufficiently large $m$ to replace $\psi_0$ by the value returned by a consistent estimator $\hat\psi$. In this example, the transformed random variables $Z_j=Y_{j1}/Y_{j0}$ are, under the model, identically distributed with  density function
	\begin{equation}\label{eqBC2020}
	f_Z(z; \psi^*) = \frac{\psi^{*}}{(1+\psi^* z)^2}.
	\end{equation}
	Since \eqref{eqBC2020} is free of all the pair-specific nuisance parameters, inference on $\psi^*$ can be constructed from a likelihood function based on \eqref{eqBC2020}, the resulting maximum-likelihood estimator being consistent as $m\rightarrow \infty$. The approach of replacing $\psi^*$ by $\hat\psi$ in $U_1(\psi^*),\ldots,U_m(\psi^*)$ is formally justified through an application of Proposition \ref{propUhat} in \S \ref{secEstimateParam}. 
	\qed
\end{example}

Example \ref{exIntro} replaces the usual semiparametric approach of assessing prediction error after regularised estimation by an in-sample assessment based on the structure of the postulated model, avoiding, without sample splitting, any post-selection model inference issues that would arise from using the same sample to choose tuning parameters and to assess the model.

Section \ref{secFramework} extracts the most important structure from Example \ref{exIntro}, aiming to probe the foundations of model inference for semiparametric and other highly parametrised formulations. A requirement of the framework is an inducement, through preliminary operations, of replication/exchangeability, of known form if and only if the model is correctly specified to an adequate order of approximation. The underlying logic is that of proof by contradiction, familiar from classical statistical testing. Exact internal replication may not always be achievable, and it remains an open question whether a mechanism can always be found to approximate it. The present paper makes a more modest contribution, showing through a multitude of challenging examples the diversity of ways in which internal replication can be induced.

\section{Broad formulation}\label{secFramework}

\subsection{Joint assessment of a model and its interest   parameters}\label{secFormulation}

Suppose that there is structure in the postulated model ensuring the existence of new independent random variables $U_j(\psi_0)$, $j=1,\ldots, m$, for every candidate value of the interest parameter $\psi_0$, such that $U_j(\psi_0)$ follows a standard uniform distribution if and only if the true distribution with parameter value $\psi^*$ belongs to the postulated model and $\psi_0=\psi^*$. Standard uniformity is a convenient convention and is equivalent to the existence of any set of $m$ $\psi_0$-dependent random variables whose distribution is known under the postulated model at $\psi_0=\psi^*$. The special case where $m=1$ and $U_1(\psi_0)=U_1$ does not depend on $\psi_0$ can be viewed as a framing of the classical approach to parametric model assessment based on co-sufficiency. The precise operationalisation of that idea in particular contexts has received little attention, but in cases where operationalisation has been attempted, the approach appears to have little or no power without modification \citep[][]{BJ2022, BRT}. When an extension to $m>1$ is available, which is achieved by inducing internal replication, power to detect an erroneous model at a given $\psi_0=\psi^*$ is in principle achievable provided that $U_1(\psi_0),\ldots, U_m(\psi_0)$ are not statistically indistinguishable from a sample of standard uniform random variables when the postulated model is wrong.

The following initial discussion considers model adequacy in terms of a confidence set for the parameter $\psi$, before introducing in \S \ref{secEstimateParam} a more powerful approach mirroring that in Example \ref{exIntro}.

A simple way to assess joint compatibility of the model and a parameter value $\psi_0$ is by Fisher's method for combining $p$-values \citep{Fisher1932}, whose traditional use is in meta analysis. Specifically, with $G_{2m}$ the distribution function of a $\chi^2$ random variable with $2m$ degrees of freedom, the two-tailed $p$-value is $2\min\{p(\psi_0), 1-p(\psi_0)\}$, where $p(\psi_0):=G_{2m}(-2\sum \log U_j(\psi_0))$. The resulting $\alpha$-level confidence set for the interest parameter $\psi$ under correct specification of the model is
\begin{equation}\label{eqConfSet}
\mathcal{C}(\alpha):=\Bigl\{\psi_0 \in \Psi: 2\min\{p(\psi_0), 1-p(\psi_0)\}>\alpha\Bigr\}.
\end{equation}
If the confidence set is empty at a chosen level $\alpha$, that casts doubt on the adequacy of the model. The use of Fisher's statistic in \eqref{eqConfSet} is convenient for analytic calculations of power because the expectation and variance of $R = -\log U$ simplify, by integration by parts, to 
\begin{equation}\label{eqIntParts}
\EE(R)= \int_0^1 \frac{F_{U}(u)}{u}du, \quad
\VV(R) = -2\int_0^1 \frac{\log(u) F_{U}(u)}{u}du - \Bigl(\int_0^1 \frac{F_{U}(u)}{u}du\Bigr)^2,
\end{equation}
where $F_U$ is the distribution function of $U$. 

Both moments increase as the density function $f_U$ of $U$ concentrates near zero, and the mean exceeds half the variance, and hence departs from the null $\chi^2_2$ behaviour, when $f_U$ departs from uniformity asymmetrically towards zero. When departures instead occur towards 1, the mean and variance are typically both too small to be compatible with the null distribution, although in this case sensitivity can be increased by replacing $U_j(\psi_0)$ by $1-U_j(\psi_0)$ in $p(\psi_0)$ from \eqref{eqConfSet}. Thus, if the majority of departures from uniformity are in the same direction, $\mathcal{C}(\alpha)$ is empty with high probability for sufficiently large $m$ when the postulated model is misspecified. More irregular departures from uniformity may be better detected by alternative combination rules \citep[see e.g.][]{Birnbaum1954, Heard2018}.

\subsection{Two parallel analyses}\label{secTwo}

In subsequent sections, we present the main conceptual development of the paper via a body of examples illustrating different ways through which internal replication can be induced under the postulated model, and through which contradictions are forced when the model is violated. Before turning to those constructions, we analyse the behaviour of the resulting confidence set $\mathcal{C}(\alpha)$, assuming the existence of random variables $U_j(\psi_0)$. The aim is not to advocate Fisher's method, whose properties are well understood under i.i.d. alternatives, but to supply qualitative insight through a reasonably general analysis, showing how systematic departures from uniformity translate to eventual emptiness of $\mathcal{C}(\alpha)$ as $m\rightarrow \infty$. Calculations at this level of generality are inevitably idealised.

When the model is erroneous or if it contains the true distribution but $\psi_0\neq \psi^*$, then $U_j(\psi_0)$, $j=1,\ldots,m$, are by definition not standard uniform, and for most of the example cases we have in mind, are not identically distributed either. Since $U_j(\psi_0)$ is necessarily supported on $[0,1]$, insight can be obtained by approximating its density function using the parametric family \citep{BC2018}
\begin{equation}\label{eqPara}
(1-\vartheta_j)u^{-\vartheta_j}, \quad  (0<u<1, \quad  0< \vartheta_j<1), 
\end{equation}
so that the null distribution is recovered in the limit as $\vartheta_j\rightarrow 0$. The model \eqref{eqPara} is deliberately oversimplified in that it specifies the directions of departure from uniformity to be in the same direction for all $j$. Under \eqref{eqPara}, the $k$th moment of $R_j:=-\log U_j(\psi_0)$ has the convenient form
\begin{align*}
\EE(R_j^k)=&\, (1-\vartheta_j)\int_0^1 (-\log u)^k u^{-\vartheta_j} du \\
 = &\, \frac{1}{(1-\vartheta_j)^{k}}\int_{0}^\infty q^{k} e^{-q}dq 
 =  \frac{\Gamma(k+1)}{(1-\vartheta_j)^{k}}, \quad k\in\{1,2,\ldots\}.
\end{align*}
The mean and variance of $R:=2\sum_j R_j$ are
\begin{equation}\label{eqMoments}
\EE(R)=2\sum_{j=1}^{m}(1-\vartheta_j)^{-1}, \quad \VV(R) 
= 4 \sum_{j=1}^{m}(1-\vartheta_j)^{-2},
\end{equation}
recovering the $\chi^2_{2m}$ mean and variance as $\vartheta_j\rightarrow 0$ for all $j$ and showing that, provided that the $\vartheta_j$ are not all zero, $R$ has a larger mean and variance than a $\chi^2_{2m}$ random variable, with the discrepancies increasing as $m\rightarrow \infty$. 

On letting $k_\alpha$ be the $1-\alpha$ quantile of the $\chi^2_{2m}$ distribution and $t_{\max} = 1 - \vartheta_{\max}$, Markov's inequality in the form presented in Appendix \ref{secMarkov} shows that
\begin{equation}\label{eqMarkov}
\text{pr}(R \geq k_{\alpha}) 
\geq 1 -  \inf_{0<t<t_{\max}} \biggl(e^{t k_{\alpha}}\prod_{j=1}^m\frac{1-\vartheta_j}{1-\vartheta_j - 2t}\biggr).
\end{equation}
Consider the term in parentheses on the right hand side. In the limit as $\vartheta_j \rightarrow 0$ for all $j$ we know that the probability is exactly $\alpha$ by construction; the lower bound provides no information in that case and gives the trivial conclusion
\[
\inf_{0<t<t_{\max}} \biggl(e^{t k_{\alpha}}\prod_{j=1}^m\frac{1-\vartheta_j}{1-\vartheta_j - 2t}\biggr) \rightarrow \inf_{0<t<t_{\max}} \frac{e^{t k_{\alpha}}}{(1-2t)^m} = 1,
\]
as expected. At the other extreme, in the limit as $\vartheta_j \rightarrow 1$ for any $j$, the product is zero. In between, the constituent terms are in the interval $(0,1)$ for all $t$ in the permissible range, so the product tends to zero exponentially fast in $m\rightarrow \infty$.

For a parallel analysis that does not involve specifying a parametric family for $U_{j}(\psi_0)$, let $\mu_m(\psi_0) = \sum_{j=1}^m\EE(R_j)$ and $\tau_m(\psi_0)=(\sum_{j=1}^m\text{var}(R_j))^{1/2}$. Then by the central limit theorem $\tau_m(\psi_0)(\sum_{j=1}^m R_j - \mu_m(\psi_0))$ is asymptotically standard normal for large $m$, thus
\[
\text{pr}(R \geq k_{\alpha}) \simeq 1- \Phi\biggl(\frac{k_\alpha - 2\mu_m(\psi_0)}{2\tau_{m}(\psi_0)}\biggr) \simeq 1- \Phi\biggl(\frac{2m + 2 \sqrt{m}z_\alpha - 2\mu_m(\psi_0)}{2\tau_{m}(\psi_0)}\biggr), \quad (m\rightarrow \infty),
\]
where $z_\alpha$ is the $1-\alpha$ quantile of the standard normal distribution and where we have used the normal approximation to the $\chi^2_{2m}$ distribution for large $m$. Under the null, where the distribution of $U_j(\psi_0)$ is uniformly distributed for all $j$, $R_j$ has a unit exponential distribution, implying that $\mu_m(\psi_0)=m$ and $\tau_m(\psi_0)=\sqrt{m}$. In this case, the previous display reduces to $\text{pr}(R\geq k_\alpha)\simeq \alpha$, which is the exact answer for any $m$. When the null distribution is violated, $\text{pr}(R \geq k_\alpha)\rightarrow 1$ if and only if $(k_\alpha-2\mu_m(\psi_0))/2\tau_m(\psi_0)\rightarrow -\infty$, which requires that the mean grows faster than the standard deviation as $m\rightarrow \infty$. There is a parallel calculation for the left tail.

\subsection{Model assessment using a null-consistent estimator of $\psi^*$}\label{secEstimateParam}

Example \ref{exIntro} illustrates a situation in which a consistent estimator of the interest parameter $\psi$ is available when the postulated model contains the true distribution, without any need to estimate the high-dimensional nuisance component. When such an estimator is available, and when the effective sample size is large enough that the resulting maximum likelihood estimator $\hat\psi$ can be treated as essentially constant at the true value $\psi^*$ under the postulated model, then there is no need to assess $U_j(\psi_0)$ for all plausible values $\psi_0$; instead $\hat\psi$ can be used in place of $\psi_0$, leading to the statistic
\begin{equation}\label{eqHatVersion}
R=-2\sum_{j=1}^m \log U_j(\hat\psi),
\end{equation}
and its complementary version with $1-U_j$ in place of $U_j$. If $R$ is extreme when calibrated against the $\chi^2_{2m}$ distribution, that casts doubt on the adequacy of the model. This is formally justified under a continuity condition on the function defining $U_j$ as outlined in Proposition \ref{propUhat}, whose proof is essentially that of Slutsky's theorem. 

\begin{proposition}\label{propUhat}
	Let $Z_j$ be any continuous random variable for which $U_j(\psi)=g(Z_j, \psi)$ is standard uniformly distributed for some unknown value of $\psi$ under the postulated model. Let $\hat\psi$ be an estimator satisfying $\hat\psi \rightarrow_p \psi_0^*$ as some $n\rightarrow \infty$, where $\psi_0^*$ is a constant. Provided that $g$ is continuous in $\psi$, then $U_j(\hat\psi)\leadsto U_j(\psi_0^*)$ as $n\rightarrow \infty$. 
\end{proposition}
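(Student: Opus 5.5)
The plan is to reduce the statement to a continuous-mapping/Slutsky argument applied to the pair $(Z_j,\hat\psi)$. Throughout, $Z_j$ is held as a fixed random variable, so its distribution does not change with $n$; only $\hat\psi$ depends on $n$. The target $U_j(\psi_0^*)=g(Z_j,\psi_0^*)$ is then a fixed random variable. It is standard uniform when $\psi_0^*=\psi^*$ and the model is correct, but the convergence claim itself does not need this.

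\textbf{Step 1 (joint convergence).} Since $\hat\psi\rightarrow_p\psi_0^*$ and the limit is a constant, Slutsky's lemma gives the joint convergence $(Z_j,\hat\psi)\leadsto(Z_j,\psi_0^*)$. The constant limit is what makes this automatic. In fact one can go further: because $Z_j$ does not vary with $n$, we have $\|(Z_j,\hat\psi)-(Z_j,\psi_0^*)\|=\|\hat\psi-\psi_0^*\|\rightarrow_p 0$. So the pair converges in probability, not only in distribution.

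\textbf{Step 2 (apply $g$).} We want to conclude $g(Z_j,\hat\psi)\rightarrow_p g(Z_j,\psi_0^*)$, which implies the claimed convergence in distribution. The continuous mapping theorem would give this directly if $g$ were jointly continuous at $(z,\psi_0^*)$ for almost every $z$ under the law of $Z_j$. The statement, however, only assumes continuity in $\psi$, so this is the step I expect to be the main obstacle. I would handle it pointwise in $z$, as follows.

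For fixed $\varepsilon>0$ and each $z$, set
\[
\delta(z)=\sup\bigl\{\delta\le 1:\ |g(z,\psi)-g(z,\psi_0^*)|<\varepsilon \text{ whenever } |\psi-\psi_0^*|<\delta\bigr\}.
\]
Continuity in $\psi$ gives $\delta(z)>0$ for every $z$. Given $\eta>0$, choose $\delta_0>0$ with $\text{pr}\{\delta(Z_j)<\delta_0\}<\eta$; this is possible because $\{\delta(Z_j)<1/k\}$ decreases to the empty set as $k\rightarrow\infty$. Then
\[
\text{pr}\bigl\{|U_j(\hat\psi)-U_j(\psi_0^*)|\ge\varepsilon\bigr\}\le \text{pr}\{\delta(Z_j)<\delta_0\}+\text{pr}\bigl(|\hat\psi-\psi_0^*|\ge\delta_0\bigr)<\eta+o(1).
\]
Letting $\eta\rightarrow 0$ gives convergence in probability.

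The one technical point is measurability of $\delta(\cdot)$. It can be handled by restricting the supremum to rational $\delta$, using continuity in $\psi$ to replace the condition over all $\psi$ by one over a countable dense set, provided $g$ is jointly measurable. Alternatively, one can simply assume joint continuity, which covers the paper's examples, such as $g(y,\psi)=y_1\psi/(y_0+y_1\psi)$ in Example \ref{exIntro}.

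\textbf{Step 3 (conclusion).} Convergence in probability implies $U_j(\hat\psi)\leadsto U_j(\psi_0^*)$. Under the postulated model with $\psi_0^*=\psi^*$, the limit is standard uniform, which justifies calibrating \eqref{eqHatVersion} against $\chi^2_{2m}$ for fixed $m$. For that calibration one applies the same argument to the vector $(U_1(\hat\psi),\ldots,U_m(\hat\psi))$ and then uses the continuous mapping theorem with the continuous map $u\mapsto-2\sum_j\log u_j$ on $(0,1)^m$.
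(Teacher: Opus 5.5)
Your argument is correct, and it takes a different route from the paper. The paper metrises weak convergence by the bounded Lipschitz distance, via Dudley's characterisation. For bounded Lipschitz $f$ it splits $\EE f(U_j(\hat\psi))-\EE f(U_j(\psi_0^*))$ on the event $\mathcal{A}=\{|\hat\psi-\psi_0^*|\le\varepsilon\}$. The part on $\mathcal{A}^c$ is bounded by $2\|f\|_\infty\,\text{pr}(\mathcal{A}^c)$, and the part on $\mathcal{A}$ by $\|f\|_{\text{L}}\,\EE\{|\hat\psi-\psi_0^*|\,\ind(\mathcal{A})\}$. That last bound silently replaces $|g(Z_j,\hat\psi)-g(Z_j,\psi_0^*)|$ by $|\hat\psi-\psi_0^*|$. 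It therefore really uses a Lipschitz condition in $\psi$ holding uniformly in $z$, which is stronger than the stated continuity in $\psi$.

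Your random modulus $\delta(Z_j)$, together with $\text{pr}\{\delta(Z_j)<1/k\}\downarrow 0$, is exactly what lets pointwise continuity in $\psi$ suffice. It also gives the stronger conclusion $U_j(\hat\psi)\rightarrow_p U_j(\psi_0^*)$. What the paper's route buys, once a uniform Lipschitz constant $L$ is granted, is a quantitative bound $\|f\|_{\text{L}}L\varepsilon+2\|f\|_\infty\,\text{pr}(|\hat\psi-\psi_0^*|>\varepsilon)$ on the bounded Lipschitz distance, driven by the concentration of $\hat\psi$; your argument gives no rate. Such a constant is available locally in Example \ref{exIntro}, where $\partial g/\partial\psi=z/(1+\psi z)^2\le 1/(4\psi)$ for all $z\ge 0$.

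On the measurability point you flag, two small remarks. First, define $\delta(z)$ with $\le\varepsilon$ rather than $<\varepsilon$, so that restricting $\psi$ to a countable dense set is an exact equivalence; with the strict inequality, a limit of values below $\varepsilon$ can equal $\varepsilon$. Second, a $g$ that is measurable in $z$ and continuous in $\psi$ is a Carath\'eodory function and hence automatically jointly measurable, so no extra assumption is needed.
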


\begin{remark}
In Example \ref{exIntro}, $\hat\psi$ was constructed from the same $Z_1,\ldots,Z_m$ used to construct $U_1,\ldots, U_m$. Since this need not be the case, $n$ in Proposition \ref{propUhat} is not necessarily equal to $m$. Sections \ref{secSynthetic} and \ref{secPH} provide examples where the values differ. 
\end{remark}

In the important special case that $\psi_0^*=\psi^*$, $U_j(\hat\psi)\leadsto U$, where $U$ is standard uniformly distributed. A quantification of the error from using the standard uniform distribution of $U_j(\psi^*)$ to approximate that of $U_j(\hat\psi)$ involves specifying the functional form of $g$, which is available in example cases.

When the model is violated in the direction of a highly parametrised alternative, $\hat\psi$ either converges in probability to a a fixed value $\psi_0^*\neq \psi^*$,  violating asymptotic standard uniformity of each $U_j(\hat\psi)$, or it has a non-degenerate limit distribution. This second scenario not only violates asymptotic uniformity of $U_j(\hat\psi)$ for every $j$ but also induces a complicated dependence between the $U_j(\hat\psi)$, $j=1,\ldots,m$. In general, any such dependence is expected to exacerbate the distributional violations in the aggregate \eqref{eqHatVersion}, so may well be beneficial for detecting departures from the model.

\subsection{Falsifiable induced replication}\label{secCosuff}

In \S \ref{secIntro} we emphasised the separation of information by sufficiency and co-sufficiency; in subsequent sections we show that there are several useful model structures that lead to similar approaches to model assessment. In the most direct of these, there are statistics $Z_1, \dots, Z_m$, each of which has the same conditional distribution $F_{Z\mid S}$, say, given the sufficient statistic $S = s$ for the nuisance parameters if and only if the postulated model is true. Example \ref{exIntro} is a version in which $S=S(\psi_0)$ is constructed using the postulated $\psi_0$ and is sufficient for the nuisance parameters only when the postulated model is correct with $\psi_0=\psi^*$. In this case, under mild regularity conditions, $U_j(\psi_0) = F_{Z\mid S}(Z_j\mid s(\psi_0))$ will follow a standard uniform distribution under the same conditions, and otherwise will violate standard uniformity unless the postulated model-$\psi_0$ pair belongs to an equivalence class of the true model within which distributions induce identical conditional distributions.

The above constructions based on sufficiency/co-sufficiency separations are the most familiar from the Fisherian parametric foundations. For semi-parametric models additional flexibility is sometimes needed in the construction of replicates $U_j(\psi_0)$. We develop below notions of information separation based on ancillarity for the nuisance parameter, a construction not typically used in parametric inference, and on conditional co-sufficiency, useful for the proportional hazards model.

\section{Non-matched examples}

\subsection{Introduction}

The most obvious settings in which inducible replication is present under the postulated model are those in which there is a degree of replication for the nuisance parameter by design, as in Example \ref{exIntro}. As illustrated in the example, there is no exchangeability across the outcomes, either within or between pair, but each nuisance parameter appears twice among the distributions of the outcomes. We start, however, with some different settings to provide the sharpest contrast to Example \ref{exIntro}. Section \ref{secPoisson} illustrates a more abstract form of internal replication after conditioning, \S \ref{secSynthetic} illustrates how internal replication can be achieved artificially through a carefully-designed randomisation scheme, and \S \ref{secPH} illustrates the possibility of extending the notion of co-sufficiency to conditional co-sufficiency. We then return in \S \ref{secMP} to the matched pair setting of Example \ref{exIntro}, where we formalise some other constructions for achieving internal replication that do not directly correspond to co-sufficiency but follow the same broad principles.

\subsection{Semiparametric time-dependent Poisson process}\label{secPoisson}

The following is a semi-parametric generalisation of a model proposed by \citet[][]{Cox1955} and best approached via \citet[][pp.~45--46]{CL1966}.  Consider a time-dependent Poisson process with intensity function $\lambda_i(t)=e^{\gamma_i + \beta t}$ for the $i$th of $n$ individuals. The model would be a plausible one for sequences of health events, in which $\gamma_i$ captures individual-specific effects and $\beta$ captures the general effect of ageing. It is semiparametric in the sense that $\gamma_i$ encapsulates arbitrary dependence on perhaps unmeasured covariates, and might be modelled in a more conventional semiparametric formulation as $\gamma_i=h(x_i)$, where $h$ is an unknown function of the covariate vector $x_i$. We avoid fitting $h$ and the associated semiparametric smoothness assumptions by leaving $h$ unspecified and eliminating all the nuisance parameters $\gamma_i$ from the analysis. The calculations leading to elimination of nuisance parameters are more involved in this model than in Example \ref{exIntro}, but follow the same principles. 

Suppose that in a fixed interval $(0,t_0)$, events are recorded at times $(t_{i1}, \ldots, t_{i m_i})$ for individual $i$. The likelihood contribution for the $i$th individual is 
\[
L(\lambda_i; t_{i1}, \ldots, t_{i m_{i}})= \prod_{j=1}^{m_i} \lambda_{i}(t_{ij})\exp\Bigl\{-\int_{0}^{t_0}\lambda_{i}(u)du\Bigr\},
\]
which, for the special form $\lambda_i(t)=e^{\gamma_i + \beta t}$ becomes
\begin{equation}\label{eqJointLik}
L(\gamma_i, \beta; t_{i1}, \ldots, t_{i m_i}) = \exp\Bigl\{m_i\gamma_i + \beta \sum_{j=1}^{m_i}t_{ij} - e^{\gamma_i}(e^{\beta t_0} - 1)/\beta\Bigr\}.
\end{equation}
Thus, the jointly sufficient statistics for $(\gamma_i, \beta)$ based on the $i$th individual are $(m_i, \sum_{j=1}^{m_i}t_{ij})$, and the conditional distribution of $\sum_{j=1}^{m_i}t_{ij}$ given $m_i$ eliminates $\gamma_i$. There is, therefore, an induced replication in the conditional distributions across the $n$ individuals. 

The marginal probability mass function for the number of events for the $i$th individual is Poisson with rate
\[
\int_{0}^{t_0}\lambda_i(u)du = e^{\gamma_i}(e^{\beta t_0} - 1)/\beta
\]
Thus, on using the right hand side of \eqref{eqJointLik}, viewed as a function of $t_{i1}, \ldots, t_{i m_i}$ for fixed parameter values, the conditional density is 
\begin{align}\label{eqConditionalPDF}
\nonumber f( t_{i1}, \ldots, t_{i m_i} \mid m_i) = \; & \frac{e^{m_i \gamma_i}e^{\beta \sum_{j}t_{ij}}\exp\bigl\{-e^{\gamma_i(e^{\beta t_0} - 1)/\beta}\bigr\}}{\frac{\{e^{\gamma_i}(e^{\beta t_0} - 1)\}^{m_i}}{m_i!}\exp\bigl\{-e^{\gamma_i(e^{\beta t_0} - 1)/\beta}\bigr\}} \\
= \; & \frac{m_i! \beta^{m_i} \exp(\beta \sum_{j=1}^{m_i}t_{ij})}{(e^{\beta t_0} - 1)^{m_i}},
\end{align}
which is the probability density function of an ordered sample of $m_i$ random variables each with density function
\begin{equation}\label{eqIndivContributins}
f_T(t;\beta)=\frac{\beta e^{\beta t}}{(e^{\beta t_0} - 1)}, \quad 0 \leq t \leq t_0,
\end{equation}
\citep[][pp.~45--46]{CL1966}. Equation \eqref{eqIndivContributins} reduces to $1/t_0$ in the limit as $\beta\rightarrow 0$. It follows that the conditional distribution of $S_i=\sum_{j=1}^{m_i}T_{ij}$ given $m_i$ is that of the sum of $m_i$ independent random variables with density function \eqref{eqIndivContributins}. \citet{CL1966} do not proceed further and suggest that a conditional likelihood based on \eqref{eqConditionalPDF} be used for inference on $\beta$. To assess the model via the approach of \S \ref{secEstimateParam}, explicit calculation of the conditional distribution of $S_i$ given $m_i$ is needed. Appendix \ref{secLaplace} shows by way of a Laplace transform that the conditional distribution function with support $s< m_i t_0$ is
\begin{align}\label{eqCondDistPoisson}
 F_{S_i|M_i}(s \mid m_i ; \beta) = & \frac{\beta^{m_i} }{(e^{\beta t_0}-1)^{m_i} \Gamma(m_i)} \sum_{v=0}^{\lfloor s/t_0\rfloor}{m_i\choose v}   (-1)^{v} e^{\beta v t_0} \int_0^{s-vt_0} e^{\beta w} w^{(m_i-1)}dw,
\end{align}
where the integral is a gamma integral. Although this alternating sum can in principle be evaluated at the observed values of $S_i$, it seems better for practical use to use an estimate of the conditional distribution of $S_i$ based on Monte-Carlo replicates of sums of $m_i$ random variables drawn from the density function \eqref{eqIndivContributins}. Alternatively, for large $m_i$ a normal approximation to the distribution can be used. Up to Monte Carlo sampling error, the resulting statistics follow a standard uniform distribution at the true value of $\beta$ under correct specification of the model. The version based on the analogue of \eqref{eqHatVersion} uses the maximum likelihood estimate $\hat \beta$ in place of $\beta$, where $\hat\beta$ maximises the conditional log likelihood function based on the product of joint conditional density functions \eqref{eqConditionalPDF} over the $n$ individuals.

Two adaptations of the model with intensity function $\lambda_i(t)$ as above have a quadratic trend of the form $\lambda_i(t)=\exp(\gamma_i + \beta_1 t + \beta_2 t^2)$ or a power-law trend of the form $\lambda_i(t)=e^{\gamma_i}t^\beta$. These are semiparametric analogues of models discussed by \citet[][p.~138]{Cox1955}, where the essential elements of the above argument apply on noting that, in the first case, $m_i$, $S_{i1}=\sum_{j=1}^{m_i}T_i$ and $S_{i2}=\sum_{j=1}^{m_i}T_i^2$ are jointly sufficient in the $i$th sample for $\gamma_i$, $\beta_1$ and $\beta_2$, and in the second $m_i$ and $S_{i}=\sum_{j=1}^{m_i}\log T_i$ are jointly sufficient for $\gamma_i$ and $\beta$.

\subsection{Post-reduction confidence sets of models}\label{secSynthetic}

A qualitatively related approach to assessing model adequacy was presented by \citet{BRT} from a different perspective. The focus was on constructing confidence sets of sparse models in a high-dimensional regression framework, where the set of all the variables that might be contemplated is large. \citet{BC2018} used sample splitting to ensure appropriate coverage properties of the confidence set of models after preliminary reduction. The purpose of \citet{BRT} was to avoid discarding information at the model assessment phase by using the sufficiency/co-sufficiency separation.

Let $Y$ be the outcome vector in the normal-theory linear model. For any postulated model with a low-dimensional set of covariates, let $\mathcal{X}\subset \RR^{n}$ be their column space, and let $\mathcal{X}^\perp$ be its orthogonal complement, so that $\mathcal{X} \oplus \mathcal{X}^\perp = \RR^n$. It is shown in \citet{BRT} that the relevant co-sufficient statistic for assessment of a given postulated model is $Q=U^\T Y/\|U^\T Y\|_2$, where $U$ is an orthogonal basis for  $\mathcal{X}^\perp$. Under the postulated model $Q$ is uniformly distributed on the surface of the unit hypersphere embedded in $\mathcal{X}^\perp$. 

There is no power from a single observation $Q$ to detect departures from uniformity on the sphere. Using a generalisation of the randomisation scheme introduced by \citet{RasinesYoung2023}, \citet{BRT} showed how to construct $k$ statistics $\tilde{Q}_1,\ldots, \tilde{Q}_k$ that are independent and identically distributed on the surface of the unit sphere if and only if the postulated model is correct. Here we improve one aspect of their approach and reframe the resulting assessment of model adequacy under the unifying framework of \S \ref{secFramework}.

The usage of $\tilde{Q}_1,\ldots, \tilde{Q}_k$ following the above construction was in the construction of confidence sets of sparse regression models via the Rayleigh test for uniformity on the sphere, whose properties are well established in the literature. It was also noted that the inner product between any pair $\tilde{Q}_i$, $\tilde{Q}_j$ is the cosine $Z$ of the angles between the synthetic replicates, after projection onto $\mathcal{X}^\perp$. Under the model hypothesis with $d_0$ explanatory variables, the $m=k(k-1)/2$ cosine angles $Z_1,\ldots,Z_m$ each have density function \citep{Fisher1915}
\begin{equation}\label{eqFisher1915}
f_{Z}(z) =\frac{\Gamma((n-d_0)/2)}{\sqrt{\pi}\Gamma\{(n-d_0-1)/2\}} (1-z^{2})^{(n-d_0-3)/2}, \quad -1<z<1.
\end{equation}
The framework of \S \ref{secFramework} in this paper is an alternative to the Rayleigh test. The distribution function corresponding to \eqref{eqFisher1915} is
\begin{equation}\label{eqFisherDist1915}
F_{Z}(z) =\frac{1}{2}+\frac{2\Gamma((n-d_0)/2)\text{sign}(z)}{\sqrt{\pi}\Gamma\{(n-d_0-1)/2\}} \int_{0}^{z^2}u^{1/2-1}(1-u)^{(n-d_0-1)/2-1} du,
\end{equation}
where the definite integral is the incomplete beta integral of arguments $1/2$, $(n-d_0-1)/2$ and $r^2$. The $m$ cosine angles are $U_j = F_{Z}(Z_j)$, $j=1,\ldots m$. Strictly speaking, only the cosine angles from disjoint pairs are independent; in practice, however, if $m$ is small relative to $n-d_0$, the dependence across all $m$ cosine angles appears sufficiently weak that the $\chi^2_{2m}$ null distribution of \eqref{eqHatVersion} holds to close approximation; see column 1 of Table \ref{tabCSM}.

Although \eqref{eqFisherDist1915} appears to not depend on the interest parameter, here the vector of regression coefficients corresponding to the postulated model is implicit in the construction of $Z_j$. The empirical coverage probabilities of the model confidence sets and the number of false models in the set are reported in \S \ref{secSimSynth}.

\subsection{Proportional hazards and conditional co-sufficiency}\label{secPH}

The proportional hazards model $h(y;x)=h_0(y)g(x_i^\T\beta)$ is unusual in that the infinite-dimensional nuisance parameter, the baseline hazard function $h_0$, can be eliminated in the partial likelihood analysis for the regression coefficient $\beta$ \citep{Cox1972, Cox1975}, circumventing estimation of $h_0$. This section frames partial likelihood in terms of a notion of conditional sufficiency for $h_0$, pointing to a corresponding notion of conditional co-sufficiency for assessment of the model, using the ideas of \S \ref{secSynthetic} to make the approach operational.

On each of a number of independent individuals there is a survival time $T_i$, and individuals are subject to non-informative censoring, so that the observable outcome is $Y_i=\min\{T_i,c_i\}$ in addition to an indicator $d_i=1$ if $Y_i=T_i$, and $d_i=0$ otherwise. We initially consider the uncensored setting $d_i=1$ for all $i$. Let $y_{(1)}<y_{(1)}<\cdots <y_{(n)}$ denote the ordered failure times and let $i_j$ be the index of the individual who fails at time $y_{(j)}$. Thus $i=i_j$ if and only if $y_i=y_{(j)}$. Together, $y_{(1)},\ldots,y_{(n)}$ and $i_1,\ldots,i_n$ are equivalent to the unordered failure times $y_{1},\ldots,y_n$, in the sense that the latter can be recovered from the former, and vice versa. Let $\mathscr{R}_j := \{i: y_i\geq y_{(j)}\}$ be the set of individuals still at risk of failure at time $y_{(j)}$. 

The following arguments implicitly underpin the development of \citet{Cox1972}. By ignoring the indices $i_1,\ldots,i_n$, the ordered survival times $y_{(1)},\ldots,y_{(n)}$ are detached from any covariate dependence and therefore must, in the absence of the information supplied by $i_1,\ldots,i_n$, tell us nothing about how covariates influence the distribution of survival times, a version of ancillarity for $\beta$ in the presence of nuisance parameters. The distribution of these ordered times depends primarily on $h_0$, the common instantaneous probability of failure at baseline.

Let $I_1,\ldots,I_{n}$ denote the random variables whose realisations are $i_1,\ldots,i_n$, and consider the conditional distribution of $I_1,\ldots,I_{n}$ given $y_{(1)},\ldots,y_{(n)}$. Write
\[
\mathscr{H}_j=\{y_{(1)},\ldots,y_{(j)},i_{1},\ldots,i_{j-1}\}
\]
for the history up to time $y_{(j)}$. The conditional probability that $i_j=i$, i.e.~the probability that a failure occurring at time $y_{(j)}$ is contributed by individual $i$, conditional on $\mathscr{H}_j$ is the familiar expression appearing in the partial likelihood for $\beta$: 
\begin{equation}\label{eqCoxPL}
p_{j,i}(\beta^*):=\text{pr}(I_j = i \mid \mathscr{H}_j) = \frac{h(y_{(j)};x_i)}{\sum_{k\in\mathscr{R}_j} h(y_{(j)};x_k)} = \frac{g(x_i^\T \beta^*)}{\sum_{k\in\mathscr{R}_j} g(x_k^\T \beta^*)}.
\end{equation}
Although the conditioning set is the entire history $\mathscr{H}_j$, the expression \eqref{eqCoxPL} is free of $y_{(1)},\ldots, y_{(j-1)}$, a type of conditional sufficiency, for the nuisance parameter $h_0$, of the increasing sets of order statistics $\{y_{(1)},\ldots, y_{(j-1)}\}$, conditional on a failure having occurred at time $y_{(j)}$.

The above discussion suggests an assessment of the proportional hazards assumption based on the conditional distributions of the indices $I_j$ given the relevant history $\mathscr{H}_j$. This is a version of co-sufficiency relative to the nuisance parameter, and leads to assessment of the model based on the same inferential partition that enabled inference on $\beta^*$. This is defensible provided that the sample size is large enough to reliably estimate $\beta^*$; it is similar to the approach based on Structure \ref{structureTrasform1} in \S \ref{secMP} to be presented, which is simpler. We explore the possibility of using the conditional distributions of the indices, first highlighting the difficulties before presenting a resolution.

The partial likelihood construction uses the chain rule of conditional probabilities
\begin{align}
\begin{split}\label{eqPartialLik}
\text{pr}(I_1 = i_1, \ldots,I_{n}=i_n) =& \, \prod_{j=1}^n \text{pr}(I_j = i_j \mid i_{1},i_2,\ldots, i_{j-1}) \\
=& \, \prod_{j=1}^n \frac{g(x_{i_j}^\T \beta^*)}{\sum_{k\in\mathscr{R}_j} g(x_k^\T \beta^*)}.
\end{split}
\end{align}
On using this to define the partial likelihood, $\beta^*$ can be consistently estimated in the usual way \citep{Cox1972,WingWong}. Construction of uniformly distributed random variables under the proportional hazards structure is, however, difficult, primarily because of the changing risk set. To see this, consider two candidate analogues for the probability integral transform based on the indices:
\[
\sum_{\ell\leq I_j}p_{j,\ell}(\beta^*), \quad \quad  \sum_{\ell \in \mathscr{R}_j, \ell\leq I_j}\frac{g(x_{i_\ell}^\T \beta^*)}{\sum_{k\in\mathscr{R}_j} g(x_k^\T \beta^*)}.
	\]
These both violate standard uniformity: in the first case the risk set changes for every index $\ell$ in the sum, so that the probabilities do not sum to one; in the second case the risk set is fixed, but the intersection of $\mathscr{R}_j$ and $\{\ell< I_j\}$ is empty, so that only the last element in the sum contributes.

The simplest way to achieve the internal is to split the sample at random into $m$ blocks $j=1,\ldots, m$ of equal size and to form the partial log-likelihood $\ell_j(\beta)$ based on \eqref{eqPartialLik} in each block. The partial likelihood scores $S_j(\beta)=\nabla_\beta \ell_j(\beta)$ are then approximately normally distributed at $\beta=\beta^*$ as $n/m \rightarrow \infty$. Independent approximately standard uniform replicates $U_1(\beta^*),\ldots, U_m(\beta^*)$ can be obtained from the probability integral transforms for the score statistics when $\beta^*$ is known. In practice $\beta^*$ is replaced by the full-sample partial likelihood estimate $\hat \beta$ based on \eqref{eqPartialLik}, as justified in Proposition \ref{propUhat}. 

Since the partial likelihood score on the full sample is asymptotically normally distributed, it may be possible to use a form of randomised inference in the vein of \citet{RasinesYoung2023} or \citet{Dh2026}, but it is unclear to what extent the theoretical guarantees are affected by violation of exact normality.

\section{Matched-pair and two-group examples}\label{secMP}

\subsection{Inferential structures for matched-pair examples}

The most obvious settings in which inducible replication is present under the postulated model are those in which principles of experimental design have been applied. Let $(Y_{j1}, Y_{j0})_{j=1}^m$ be outcomes on the treated and untreated units for the $m$ pairs. The following example illustrates a different route to the sufficiency separation than that of Example \ref{exIntro}.

\begin{example}\label{exIntroCont}
	Consider Example \ref{exIntro}. The statistic $U_j(\psi_0)$ constructed there arises also through a simpler argument. Specifically, on replacing the observed value $s_j(\psi_0)$ by the corresponding random variable, we obtain
	\begin{equation}\label{eqEquiv}
	U_j(\psi_0) = \frac{Y_{j1}\psi_0}{S_{j}(\psi_0)} = \frac{\psi_0 Z_j}{1+ \psi_0 Z_j},
	\end{equation}
	where $Z_j=Y_{j1}/Y_{j0}$. The distribution function of $Z_j$ under the proportional rates model of Example \ref{exIntro} is
	\begin{equation}\label{eqDistExpMult}
	F_{Z}(z;\psi^*) = \frac{\psi^{*} z}{1+ \psi^{*} z}.
	\end{equation}
	Thus \eqref{eqEquiv} is $F_{Z}(Z_j;\psi_0)$, showing through a different route that uniformity is achieved if and only if the postulated model is correct with $\psi_0=\psi^*$. The transformation to $Z_j=Y_{j1}/Y_{j0}$ is the most immediate way to eliminate the nuisance parameters in this scale model. It was not a priori obvious, however, that this simple route to internal replication preserved all the co-sufficient information, in the relevant generalised sense. \qed
\end{example}

The set of random variables $Z_1,\ldots,Z_m$ from Example \ref{exIntroCont} might be viewed from a different perspective as ancillary for the nuisance parameters, in the sense that from observation of these statistics alone, estimation of $\gamma_1,\ldots,\gamma_m$ is impossible. There is some ambiguity of terminology in these highly parametrised models where generalised inferential separations are needed. In particular, in a conventional parametric context, an ancillary statistic must be part of the minimal sufficient statistic, and the co-sufficient information is that left over after conditioning on the latter. When there is no exact sufficiency reduction, such as here, the separations are less clearly defined.

\begin{example}\label{exampleAdditiveRate}
	Suppose that the true distribution of $(Y_{j1}, Y_{j0})$ belongs to an additive exponential model with rates $\gamma_j+\Delta$ for the treated individual in pair $j$ and $\gamma_j$ for the untreated individual, with true value $\Delta^*$ for the treatment parameter. In the additive model, the statistic $S_j =Y_{j1}+Y_{j0}$ is sufficient for the nuisance parameter $\gamma_j$ and the conditional distribution function of $Y_{j1}$ given $S_{j}=s_j$ is 
	\begin{equation}\label{conditionalDist}
	F_{Y_{j1} | S_j }(y \cond s_j \, ; \Delta^*) = \frac{1- e^{-\Delta^* y}}{1-e^{-\Delta^* s_j}}.
	\end{equation}
	The random variables $U_{j}(\Delta^*):=F_{Y_{j1} | S_j}(Y_{j1} \cond s_j \, ; \Delta^*)$ are uniformly distributed conditional on $S_j=s_j$, and therefore also unconditionally. The above conclusion is invalidated if $\Delta^*$ is replaced by another value $\Delta_0$. \qed
\end{example}

All replication-inducing constructions for matched pairs fall into two broad classes. The first eliminates pair-specific nuisance parameters by conditioning on a co-sufficient statistic, as illustrated in Examples \ref{exIntro} and \ref{exampleAdditiveRate}; the second eliminates them by transformation to a random variable whose distribution is free of nuisance parameters, as illustrated by Example \ref{exIntroCont}. Each of these may or may not depend on the parameter of interest, leading to four generic inferential structures.

\begin{structure}\label{structureSuff1}
	The joint density function $f(y_{1}, y_0; \gamma_j, \psi)$ of $(Y_{j1}, Y_{j0})$ is such that there is a statistic $S_j$, not depending on $\psi$, such that $S_j:=s(Y_{j1},Y_{j0})$ is sufficient for $\gamma_j$ and the conditional distribution function of $F_{Y_{j1}|S_j}(y \cond s_j; \psi)$ is continuous and bijective in $y$ for any $\psi$, and injective in $\psi$ for any fixed $y$.
\end{structure}

To exploit Structure \ref{structureSuff1}, it is immaterial whether $Y_{j1}$ or $Y_{j0}$ is used in the conditional distribution.

\begin{structure}\label{structureSuff2}
The second co-sufficiency structure parallels Structure \ref{structureSuff1}, except that the sufficient statistic depends on the interest parameter $\psi$, as in Example \ref{exIntro}.
\end{structure}

The two co-sufficiency structures contain all the residual information having eliminated the nuisance parameters through conditioning. At least superficially, Structures \ref{structureSuff1} and \ref{structureSuff2} would appear the most appropriate for joint inference on the model and the corresponding parameters. In view of \eqref{eqEquiv}, however, there is also interest in transformation-based structures, which can be viewed as exploiting a version of ancillarity. In an idealised parametric analysis, an ancillary statistic can be calibrated against its marginal distribution for assessment of the model. The appropriate analogue in highly parametrised settings is ancillarity for the nuisance parameters, in the sense that, from observation of ancillary random variables alone, the nuisance parameters cannot be estimated. Many versions of approximate ancillarity exist for conditional inference on a parameter of interest; this generalised version for model assessment seems new.

\begin{structure}\label{structureTrasform1}
	The joint density function $f(y_{1}, y_0; \gamma_j, \psi)$ of $(Y_{j1}, Y_{j0})$ is such that there is a transformation $t$, not depending on $\psi$, such that $Z_j:=t(Y_{j1},Y_{j0})$ has a distribution function $F_Z(z;\psi)$ that is continuous and bijective in $z$ for any fixed $\psi$, injective in $\psi$ for any fixed $z$, and functionally independent of the nuisance parameter $\gamma_j$.
\end{structure}

\begin{structure}\label{structureTrasform2}
Structure \ref{structureTrasform2} parallels Structure \ref{structureTrasform1}, except that the transformation depends on the interest parameter $\psi$.
\end{structure}

While Structure \ref{structureTrasform1} only applies in the case of transformation models, Structure \ref{structureTrasform2} applies more generally, as illustrated below in Example \ref{exampleGamma}. For transformation families, Structure \ref{structureTrasform2} reduces to Structure \ref{structureTrasform1}, so these are not distinct inferential structures in that case. The statistics $Z_j$ in Structures \ref{structureTrasform1} and \ref{structureTrasform2} can be viewed as ancillary statistics for the nuisance parameters according using the extended definition of \citet[][p.~38]{BNC1994}.

The following proposition formalises the sense in which each structure induces internal replication. Here we drop pair-specific indices for notational conciseness. The propositions refer, for each of Structures \ref{structureSuff1}--\ref{structureTrasform2}, to equivalence classes of density functions within which the approach has no power to reject the postulated model. In other words, if the true density function violates the postulated model but belongs to the stated equivalence class, then the model violation will not be detected at any sample size. An analogy is to Markov equivalence classes in Gaussian graphical models, where multiple causal models give rise to the same distribution over the observable quantity and therefore are statistically indistinguishable.

\begin{proposition}\label{propPairs}
	Let $(Y_{1}, Y_{0})$ have a joint density function, provisionally assumed to belong to the model $\mathcal{M}=\{f(y_{1}, y_0; \gamma, \psi): \gamma \in \Gamma, \psi \in \Psi\}$, where $f$ satisfies one of Structures \ref{structureSuff1}--\ref{structureTrasform2}. When any such structure holds, define, respectively
\begin{align*}
U^{(1)}(\psi_0) &:= F_{Y_1 \mid S}(Y_1 \,\big|\, s;\psi_0),
& & S = s(Y_1, Y_0), \\
U^{(2)}(\psi_0) &:= F_{Y_1 \mid S(\psi_0)}(Y_1 \,\big|\, s(\psi_0)),
& & S(\psi_0) = s(Y_1, Y_0; \psi_0), \\
U^{(3)}(\psi_0) &:= F_Z(Z;\psi_0),
& & Z = t(Y_1, Y_0), \\
U^{(4)}(\psi_0) &:= F_{Z(\psi_0)}(Z(\psi_0)),
& & Z(\psi_0) = t(Y_1, Y_0; \psi_0).
\end{align*}
	 where $S$ and $S(\psi_0)$ are sufficient for $\gamma$ at a particular value of $\psi_0$ under the relevant postulated model, and the distribution functions $F$ are those calculated under the postulated model at $\psi_0$. Then $U^{(k)}(\psi_0)\sim U(0,1)$ if and only if $\mathcal{M}$ holds with true parameter value $\psi^*=\psi_0$, or if the true distribution of $(Y_0, Y_1)$ with density function $g^*$ belongs to the equivalence class $\mathcal{E}^{(k)}(\psi_0)$, as specified in Appendix \ref{secEquiv}.
\end{proposition}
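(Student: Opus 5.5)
The plan is to treat the four cases in parallel, since each $U^{(k)}(\psi_0)$ is a probability integral transform, and then to read the equivalence classes $\mathcal{E}^{(k)}(\psi_0)$ off the ``only if'' direction. Everything rests on one elementary fact. If $W$ is a random variable and $F$ is a continuous, strictly increasing distribution function (bijective onto $(0,1)$ on the relevant support), then $F(W)\sim U(0,1)$ if and only if $W$ has distribution function $F$. The forward direction is the usual probability integral transform. For the converse, $\text{pr}\{F(W)\le u\}=\text{pr}\{W\le F^{-1}(u)\}$, so uniformity forces $\text{pr}(W\le w)=F(w)$ for every $w$. The continuity and bijectivity clauses in Structures~\ref{structureSuff1}--\ref{structureTrasform2} are there precisely so that this fact applies.

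For the ``if'' direction I would proceed case by case.

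\emph{Structures \ref{structureTrasform1} and \ref{structureTrasform2}.} If $\mathcal{M}$ holds with $\psi^*=\psi_0$, then $Z=t(Y_1,Y_0)$ (respectively $Z(\psi_0)$) has distribution function $F_Z(\cdot;\psi_0)$ whatever the value of $\gamma$. The fact above gives $U^{(3)},U^{(4)}\sim U(0,1)$ directly.

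\emph{Structures \ref{structureSuff1} and \ref{structureSuff2}.} Here I would argue conditionally. Given $S=s$ (respectively $S(\psi_0)=s$), $Y_1$ has the continuous distribution function $F_{Y_1\mid S}(\cdot\mid s;\psi_0)$. By sufficiency this conditional law is free of $\gamma$, so $U^{(1)}$ is $U(0,1)$ conditionally on $S=s$ for every $s$. Integrating over the law of $S$ then gives unconditional uniformity, exactly as in Example~\ref{exampleAdditiveRate}.

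For the ``only if'' direction, suppose $g^*$ is the true density and $U^{(k)}(\psi_0)\sim U(0,1)$. The fact above reverses the argument in each case.
\begin{itemize}
\item For $k=3,4$ it forces the law of $Z$ (respectively $Z(\psi_0)$) under $g^*$ to have distribution function $F_Z(\cdot;\psi_0)$. I would therefore define $\mathcal{E}^{(3)}(\psi_0)$ and $\mathcal{E}^{(4)}(\psi_0)$ as the set of densities $g^*$ under which the pushforward of $g^*$ through $t(\cdot;\psi_0)$ equals $F_Z(\cdot;\psi_0)$. Any $g^*\in\mathcal{M}$ with parameter $(\gamma,\psi^*)$, $\psi^*\ne\psi_0$, is excluded by injectivity of $F_Z$ in $\psi$. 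So within $\mathcal{M}$ uniformity forces $\psi^*=\psi_0$, and outside $\mathcal{M}$ it forces membership of $\mathcal{E}^{(k)}$.
\item For $k=1,2$, unconditional uniformity only says that the mixture over $s$ of the conditional laws of $U^{(k)}$ given $S=s$, taken under $g^*$, is uniform. The class is therefore $\mathcal{E}^{(1)}(\psi_0)=\{g^*: \int \text{pr}_{g^*}\{F_{Y_1\mid S}(Y_1\mid s;\psi_0)\le u \mid S=s\}\,dG^*_S(s)=u \ \forall u\}$, with the analogous definition for $S(\psi_0)$. This class contains every $g^*$ whose conditional law of $Y_1$ given $S$ coincides with the postulated one at $\psi_0$.
\end{itemize}
Within $\mathcal{M}$ at $\psi^*\neq\psi_0$, I would show that uniformity fails using injectivity in $\psi$.

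I expect that last step to be the main obstacle. For $k=1,2$, injectivity of $F_{Y_1\mid S}(y\mid s;\psi)$ in $\psi$ makes each conditional law of $U$ non-uniform, but a mixture of non-uniform laws could in principle be uniform. My first attempt would be to note that under $\mathcal{M}$ with $\psi^*$ the conditional laws form a one-parameter family. I would then argue via monotone likelihood ratio, or a stochastic ordering in $\psi$ as in the exponential examples, that every conditional law of $U$ lies on the same side of uniform, so the mixture cannot be uniform. If that cannot be established in general, the fallback is to absorb such cases into $\mathcal{E}^{(k)}(\psi_0)$ by definition. The proposition is stated with equivalence classes specified in Appendix~\ref{secEquiv} precisely to allow this, which makes the ``if and only if'' hold essentially tautologically once the classes are written down.
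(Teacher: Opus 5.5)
Your treatment of Structures~\ref{structureTrasform1} and~\ref{structureTrasform2}, and of the ``if'' direction in all four cases, is the paper's argument. The probability-integral-transform fact you isolate is exactly what the paper uses, phrased there for a generic $V$ with postulated law $F_V(\cdot;\psi_0)$ and true law $G_V^*$. Your pushforward description of $\mathcal{E}^{(3)}(\psi_0)$ and $\mathcal{E}^{(4)}(\psi_0)$ also matches Appendix~\ref{secEquiv}.

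The gap is the ``only if'' direction for $k=1,2$, where you read $U^{(k)}(\psi_0)\sim U(0,1)$ unconditionally. Your monotonicity idea gets further than you suggest, with no likelihood-ratio assumption. For $k=1$ under $\mathcal{M}$ at $\psi^*\neq\psi_0$, continuity in $y$ and injectivity in $\psi$ at each fixed $y$ force $F_{Y_1\mid S}(y\mid s;\psi^*)-F_{Y_1\mid S}(y\mid s;\psi_0)$ to keep one sign on the (interval) conditional support. So each conditional law of $U^{(1)}(\psi_0)$ lies strictly on one side of uniform. But nothing in Structure~\ref{structureSuff1} prevents that side from changing with $s$, so the mixture step stays open.

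The fallback does not close it. Your class $\{g^*:\int \text{pr}_{g^*}\{\cdots\mid S=s\}\,dG^*_S(s)=u\ \forall u\}$ is, by construction, the set of $g^*$ under which $U^{(1)}(\psi_0)$ is uniform, so the resulting equivalence is circular. It is also not the class the statement refers to. Appendix~\ref{secEquiv} defines $\mathcal{E}^{(1)}(\psi_0)$ as the densities $g$ whose conditional density of $Y_1$ given $S=s$, namely $g(y_0(s,y_1),y_1)|J|/\int g(y_0(s,y_1),y_1)|J|\,dy_1$, equals $f_{Y_1\mid S}(y_1\mid s;\psi_0)$. By sufficiency and injectivity this class contains no member of $\mathcal{M}$ with $\psi^*\neq\psi_0$, so such cases cannot be absorbed into it.

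The missing idea is to work conditionally on the sufficient statistic, as the paper does. For $k=1,2$ it takes as $V$ the notional variable $Y_\bullet(S)$ (respectively $Y_\bullet(S(\psi_0))$), whose law, once $S=s$ is observed, is the conditional law of $Y_1$ given $S=s$. Then $U^{(1)}(\psi_0)=F_{Y_1\mid S}(V\mid s;\psi_0)$ is a probability integral transform of $V$, exactly as in the cases $k=3,4$. Your lemma, applied for each $s$, shows that $U^{(1)}(\psi_0)$ is uniform given $S=s$ if and only if the true conditional law of $Y_1$ given $S=s$ is $F_{Y_1\mid S}(\cdot\mid s;\psi_0)$. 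That is literally membership of the appendix class. Every member of $\mathcal{M}$ with $\psi^*=\psi_0$ lies in that class by sufficiency, so this is already the stated equivalence and no mixture argument is needed. Your own ``if'' argument already delivers uniformity conditionally on every $s$.

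Your worry about mixtures is nonetheless well founded: it shows the conditional reading is necessary rather than merely convenient. Suppose that under some $g^*$ the conditional law of $U^{(1)}(\psi_0)$ is Beta$(2,1)$ for $s$ in a set of probability $1/2$ and Beta$(1,2)$ otherwise. This is easily arranged with $S=Y_1+Y_0$ as in Example~\ref{exampleAdditiveRate}. Then $U^{(1)}(\psi_0)$ is exactly uniform unconditionally, since the densities $2u$ and $2(1-u)$ average to $1$, yet $g^*\notin\mathcal{E}^{(1)}(\psi_0)$. So with the appendix's classes, the ``only if'' direction is false under the unconditional reading.
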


\begin{proof}
A proof is given in the appendix.
\end{proof}

The four equivalence classes presented in the appendix are hard to parse and best explained with reference to the previous examples. Thus, in the context of Examples \ref{exIntro} and \ref{exIntroCont}, the relevant equivalence class is $\mathcal{E}^{(3)}(\psi_0)$ and consists of all distributions over the original random variables $(Y_{j1},Y_{j0})$ for which the induced distribution over the ratios $Z_j=Y_{j1}/Y_{j0}$ is identical to that postulated under $\mathcal{M}$ with the postulated value of $\psi_0$. It is clear from this example and others that members of the equivalence class, if any such member exists, must be such that any pair-specific heterogeneity is eliminated through the same operation as under the postulated model, as otherwise no replication would be induced and standard uniformity for every pair could not be achieved. Elimination of pair dependence is a necessary but not a sufficient condition. Consider, for instance, a true distribution belonging to the Weibull family with multiplicative treatment effect on the rate scale and a postulated model that is exponential, a special case, then ratios $Z_j=Y_{j1}/Y_{j0}$ eliminate the pair parameters in both cases, but no members of the Weibull model with non-unit shape parameter belong to the equivalence class $\mathcal{E}^{(3)}(\psi_0)$ for any value of $\psi_0$, as the induced distribution over $Z_j$ is different.

For a different perspective, consider fitting $\psi$ by maximum likelihood in the family of densities for the induced random variable corresponding to Structure \ref{structureSuff1} or \ref{structureTrasform1}, i.e.~the conditional and marginal density functions. If the true density function $g^*$ belongs to $\{\mathcal{E}^{(k)}(\psi_0): \psi_0\in\Psi\}$ for $k=1,3$ respectively, then the limiting maximum likelihood solution $\psi_0^*$ under the postulated model gives zero Kullback-Leibler divergence within the induced families. In other words, there is a point in the parameter space $\Psi$ for which the induced model on $f_{Y_1|S}$ or $f_Z$ respectively intersects with a ``true model'' on the induced random variables, this being any family of induced models that contains the true distribution over the induced random variables.

While Proposition \ref{propPairs} ensures appropriate coverage when the postulated model is correct, it does not automatically translate to a powerful method for rejecting false models, even when they do not intersect with the equivalence classes $\{\mathcal{E}^{(k)}(\psi_0): \psi_0\in\Psi\}$, since when the postulated model is violated at $\psi_0$, the notional replicates $U_j^{(k)}(\psi_0)$ typically depend on the nuisance parameters $\gamma_j$, and the summation in \eqref{eqConfSet} does not necessarily produce an aggregate that is sufficiently far in the tail of a $\chi^2_{2m}$ distribution. It is partly for this reason that the version based on $\hat\psi$ from \S \ref{secEstimateParam} is preferred. We explore the above examples in \S \ref{secPowerPairs}.

\subsection{Some local power analyses}\label{secPowerPairs}

Here we consider perturbative examples in which the true distribution belongs to a model that departs only slightly from that postulated. This is done partly with a view to assessing the effectiveness of \eqref{eqConfSet} for joint assessment of the model and its parameters relative to the approach in \S \ref{secEstimateParam}, where a consistent estimator of the interest parameter is used. In the first of the examples below, the same preliminary reductions eliminate the nuisance parameters under both models, so that there is no excess variability under the true model, relative to that postulated, induced through the pair-specific nuisance parameters.

\begin{example}\label{exampleTrueWeibull}
	Let the postulated model be the proportional rates model  of Example \ref{exIntro} and suppose that true distribution of the paired outcomes is Weibull with rate parameters $\gamma_j$, and $\gamma_j\psi^*$, respectively for untreated and treated individuals in the $j$th pair. Suppose further that the shape parameter of these Weibull distributions is $\varsigma\neq 1$, so that the postulated model is misspecified. In spite of this, the same pairwise operation eliminates $\gamma_j$ under both models. Specifically, the distribution and density functions of the ratios $Z_j=Y_{j1}/Y_{j0}$ are
	\begin{equation}\label{eqWeibullRatioDist}
	F_{Z}(z) = \frac{\psi^* z^\varsigma}{1+ \psi^* z^{\varsigma}},\ \quad f_{Z}(z) = \frac{\psi^*\varsigma z^{\varsigma-1}}{(1+ \psi^* z^{\varsigma})^2},
	\end{equation}
	and, by monotonicity, the distribution function of $U_j(\psi_0)$ in \eqref{eqEquiv} is
	\begin{align}\label{eqMono}
	\text{pr}(U_j(\psi_0) \leq u)= \text{pr}\Bigl(Z_j \leq \frac{u}{\psi_0(1-u)} \Bigr) = \frac{\psi^* u^\varsigma}{\psi_0^{\varsigma}(1-u)^{\varsigma}+\psi^* u^{\varsigma}}, \qquad \;\;\; 0\leq u \leq 1,
	\end{align}
which is non-uniform for all $\varsigma\neq1$ and all $\psi_0$.

Although no choice of $\psi_0$ gives uniformity, the expectation of 
$R_j=-\log U_j(\psi_0)$ can be shown via \eqref{eqIntParts} to be monotonically decreasing in $\psi_0$, implying the existence of a unique value of $\psi_0$ at which $\EE(R_j)=1$. Consequently, departures from exponentiality in the direction of the Weibull model are difficult to detect using the aggregation criterion \eqref{eqConfSet}. 

For comparison, consider maximum likelihood estimation based on the induced density based on \eqref{eqDistExpMult}. When the true model is Weibull with $\varsigma$ close to 1, the limiting maximiser $\psi_0^*$ of the expected log-likelihood satisfies $\psi_0^*=\psi^*+O(\varsigma-1)$ (see Appendix~\ref{secWeibullDeriv}). Thus, the estimator remains locally consistent despite model misspecification, and by Proposition \ref{propUhat} and equation \eqref{eqMono},
\begin{equation}\label{eqUMisspec}
\text{pr}(U_j(\hat\psi)\leq u) \rightarrow \frac{\psi^{*(1-\varsigma)} u^\varsigma}{(1-u)^{\varsigma}+\psi^{*(1-\varsigma)} u^{\varsigma}} \neq u, \quad j=1,\ldots, m.
\end{equation}
This shows that the criterion \eqref{eqHatVersion} has power to detect departures from the model even though the maximum likelihood estimator of $\psi^*$ based on the transformed random variables $Z_j$ is consistent in spite of the model misspecification. In particular, from \eqref{eqIntParts}, $\EE(R_j)=1$ only when $\varsigma=1$, the point at which the Weibull model collapses to the exponential model. Figure \ref{figHeatmap} plots the logarithm of $\EE(R_j)$ and $\VV(R_j)$ obtained by numerical integration as a function of $\varsigma$ and $\psi^*$. This shows that \eqref{eqHatVersion} is most sensitive to departures from the postulated model in the direction of $\varsigma<1$, with neighbourhoods of $\varsigma=1$ and $\psi^*$ being difficult to detect, uniformly over the values of the other variable. There is power in the left tail of the test statistic to detect departures from the model when $(\varsigma,\psi^*)$ take values in the upper right quadrant of the reported range.

\begin{figure}
	\centering
	\includegraphics[trim=0.6in 2.9in 0.6in 3.3in, clip, height=0.385\textwidth]{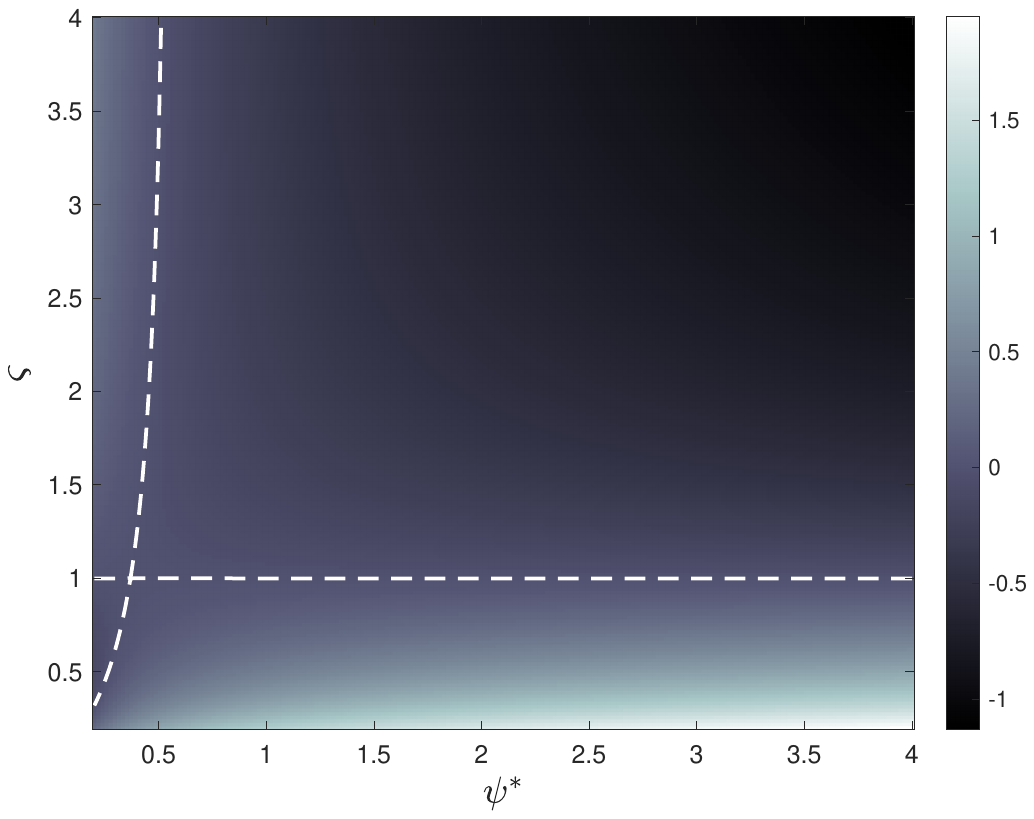}
	\includegraphics[trim=0.6in 2.9in 0.6in 3.3in, clip, height=0.385\textwidth]{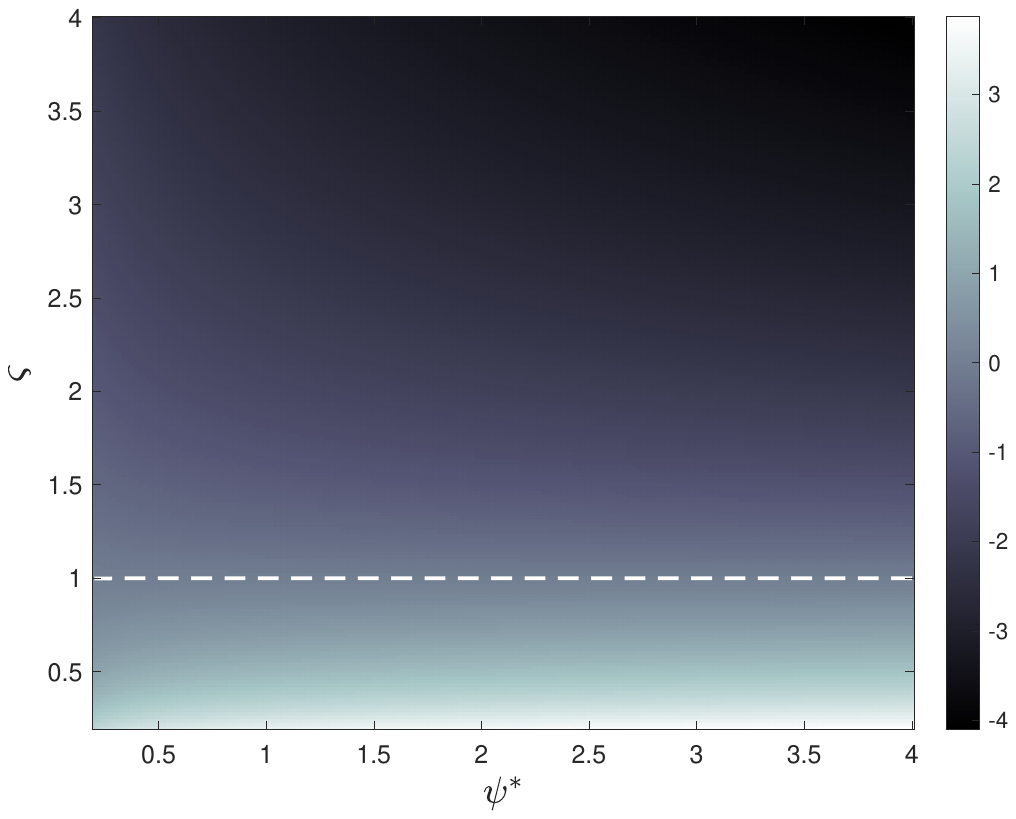}
	\caption{Logarithm of $\EE(R_j)$ and $\VV(R_j)$ calculated from \eqref{eqUMisspec} and \eqref{eqIntParts}, showing how the sensitivity to detect model misspecification varies with $\varsigma$ and $\psi^*$. Dashed lines indicate regions of the parameter space for the true Weibull model where $\EE(R_j)$ and $\VV(R_j)$ coincide with their  putative values under the hypothesised exponential model. 
		\label{figHeatmap}}
\end{figure}

\end{example}

\begin{example}\label{exampleTrueAdditive}
Another type of departure from the proportional exponential model of Example \ref{exIntro} is in the direction of the additive-rates exponential model of Example \ref{exampleAdditiveRate}, with treatment effect $\Delta^*$. In the additive model, the nuisance parameters are eliminated by conditioning on the pairwise sums, while for the multiplicative model, conditioning is on $Y_{j0} + Y_{j1}\psi_0$ for postulated values of $\psi$. Since there is no true value $\psi^*$ when the model is erroneous, this second conditioning argument is ineffective at eliminating the nuisance parameters except at $\psi_0=1$. 

Since Structure \ref{structureTrasform1} leads to an identical inferential procedure as Structure \ref{structureSuff2} under the multiplicative model, interest lies in the behaviour of $Z_j=Y_{j1}/Y_{j0}$ under the additive model, which has distribution function
\begin{equation}\label{eqDistZAdd}
F_{Z_j}(z) =    \text{pr}(Z_j \leq z) = \frac{(\gamma_j + \Delta^*)z}{{\gamma_j + (\gamma_j + \Delta^*)z}}. 
\end{equation}
Since \eqref{eqDistExpMult} and \eqref{eqDistZAdd} agree at the null effect value $\Delta^*=0$, which corresponds to $\psi=1$, we expect greater difficulty in detecting inadequacy of the multiplicative model for small values of $\Delta^*$. A more exact analysis of power echoing the general discussion of \S \ref{secTwo} requires calculation of the density function of $U_j(\psi_0)=\psi_0 Z_j/(1+ \psi_0 Z_j)$ when the distribution of $Z_j$ is specified by \eqref{eqDistZAdd}. By monotonicity,
\begin{align}\label{eqMono2}
\text{pr}(U_j(\psi_0) \leq u)= \text{pr}\Bigl(Z_j \leq \frac{u}{\psi_0(1-u)} \Bigr) = \frac{(\gamma_j + \Delta^*)u}{\gamma_j\psi_0(1-u) + (\gamma_j + \Delta^*)u}, \quad 0\leq u \leq 1,
\end{align}
with corresponding density function
\[
f_{U}(u) = \frac{\gamma_j (\gamma_j+\Delta^*) \psi_0}{(\gamma_j\psi_0 (1-u) + (\gamma_j+\Delta^*) u)^2}, \quad 0\leq u \leq 1.
\]
This is not of the form \eqref{eqPara}, even approximately for $\Delta^*$ close to 0, but exact calculation via \eqref{eqIntParts} shows that the expectation of $R_j=-\log U_j(\psi_0)$ is
\[
\EE(R_j) = \frac{\gamma_j+\Delta^*}{\gamma_j+\Delta^* - \gamma_j\psi_0}\log \Bigl(\frac{\gamma_j+\Delta^*}{\gamma_j\psi_0}\Bigr) = \frac{\eta_j \log \eta_j}{\eta_j -1}, \quad \gamma_j + \Delta^*>0,
\]
where $\eta_j = (\gamma_j + \Delta^*)/\gamma_j\psi_0$. This is monotonically increasing in $\eta_j>0$ and crosses $\EE(R_j)=1$ at $\eta_j=1$, which corresponds to $\Delta^*>\gamma_j(\psi_0-1)$. Thus,  sensitivity in the right tail tends to increase with $\Delta^*$, and values $\psi_0\leq 1$ are more easily detected as erroneous than $\psi_0>1$ when $\Delta^*>0$, confirming intuition. The criterion $\Delta^*>\gamma_j(\psi_0-1)$ does, however, suggest that there are larger values of postulated $\psi_0$ that make the erroneous postulated model difficult to refute on the basis of the aggregation criterion \eqref{eqConfSet}, in spite of violation of standard uniformity; this is supported by numerical checks. It seems necessary in this example, therefore, to use the version described in \S \ref{secEstimateParam}, echoing the conclusion from the previous example. 

Appendix \ref{secAdditiveDeriv} presents a local asymptotic expansion of the maximum likelihood solution in a neighbourhood of the point of intersection  $\Delta^*=0$ of the two models showing, as expected, that there is little sensitivity to departures from the postulated multiplicative model in the direction of the additive model at small $\Delta^*$. 
\end{example}

The purpose of Examples \ref{exampleTrueWeibull} and \ref{exampleTrueAdditive} is to provide insight into the behaviour by way of some special cases where intuition can be recovered from direct calculation. These perturbative cases are ones in which the postulated model is so close to the true one that there is little harm in using it as a basis for inference. More substantial violations of modelling assumptions are probed by simulation in \S \ref{secSimMP}.

\subsection{Extension to unbalanced strata}\label{secTwoGroup}

The experimental setting of \S \ref{secMP} is a special case of a more general two-group problem arising in observational settings. Specifically, observations on treated and untreated individuals are stratified into groups that are as similar as possible. This typically leads to unbalanced strata, having a different number of individuals in the treated and untreated groups. Inference is based on the sufficient statistics $S_{j1}$ and $S_{j0}$ within treatment groups and strata, which can be treated in an analogous way to the paired observations of \S \ref{secMP}, as the strata sizes $r_{j1}$ and $r_{j0}$ are known.
	
Let $(Y_{ij1})_{i=1}^{r_{j1}}$ and $(Y_{ij0})_{i=1}^{r_{j0}}$ be observations within the $j$th stratum for treated and untreated individuals respectively. As a first example, if the observations $(Y_{ij1})_{i=1}^{r_{j1}}$ and $(Y_{ij0})_{i=1}^{r_{j0}}$ are normally distributed with means $\gamma_{j}+\psi^*$ and $\gamma_{j}$ and variance $\tau$, the likelihood contribution to the $j$th stratum depends on the data only through $S_{j1}=\sum_{i=1}^{r_{j1}}Y_{ij1}/r_{j1}$ and $S_{j0}=\sum_{i=1}^{r_{j0}}Y_{ij0}/r_{j0}$. The difference $Z_j=\sum_{i=1}^{r_{j1}}Y_{ij1}/r_{j1}-\sum_{i=1}^{r_{j0}}Y_{ij0}/r_{j0}$ is normally distributed of mean $\psi^*$ and variance $\tau r_{j0} r_{j1}/(r_{j0}+r_{j1})$, which can be handled as discussed in \S \ref{secEstimateParam} with $\tau$ also estimated. This is an example of Structure \ref{structureTrasform1}, but the same answer is achieved via a conditioning argument based on Structure \ref{structureSuff1}.

If the individual observations are Poisson distributed counts of rates $\gamma_{j}\psi^*$ and $\gamma_{j}$ \citep{CoxWong2010}, the sufficient statistics $S_{j1}$ and $S_{j0}$ are sums of these counts, Poisson distributed of rates $r_{j1}\gamma_{j}\psi^*$ and $r_{j0}\gamma_{j}$. The distribution of $S_{j1}$ or $S_{j0}$ conditional on $S_{j1}+S_{j0}$ eliminates $\gamma_j$ in the usual way. This is an example of Structure \ref{structureSuff1}. 

Example \ref{exampleGamma} is more interesting as it relies on Structure \ref{structureTrasform2} to eliminate the nuisance parameters.

\begin{example}\label{exampleGamma}
If the originating variables are exponentially distributed of rates $\gamma_{j}\psi^*$ and $\gamma_{j}$, the sufficient statistics  $S_{j1}$ and $S_{j0}$ are gamma-distributed sums of shape and rate parameters $(r_{j1}, \gamma_j\psi^*)$ and $(r_{j0},\gamma_j)$ respectively. Then
\[
Z_j(\psi^*):=\frac{r_{j0}\psi^* S_{j1}}{r_{j1}S_{j0}} 
\]
has the $F$ distribution with parameters $2r_{j1}$ and $2r_{j0}$. The strategy of \S \ref{secEstimateParam} based on Proposition \ref{propPairs} then applies directly. \qed
\end{example}

\section{Numerical performance}\label{secSim}

\subsection{Matched pairs}\label{secSimMP}

Data were generated from a joint model for matched pairs $(Y_{j0},Y_{j1})_{j=1}^m$ in which an additive treatment effect $\Delta^*$ operates on the rate scale for Weibull distributed outcomes with baseline rate parameters $(\gamma_j)_{j=1}^m$ and shape parameter $\varsigma$, where $(\gamma_j)_{j=1}^m$ were generated from a standard uniform distribution. The assumed model is exponential with a multiplicative rate parameter as in Example \ref{exIntro}. This notional parameter was estimated by maximum likelihood based on the density function of the ratios $Z_j=Y_{j1}/Y_{j0}$ which, from \eqref{eqDistExpMult} is $\psi/(1+\psi z)^2$. Fisher's statistic $2\sum_{j=1}^m \log U_j(\hat\psi)$ was then constructed based on \eqref{eqDistExpMult} with $\psi^*$ replaced by $\hat\psi$ and $z$ replaced by $Z_j$; Table \ref{tabAdditive} reports the results. At $(\Delta^*, \varsigma)=(0,1)$, the true model intersects with the postulated model with null treatment effect, captured by value $\psi=1$. Thus, the proportion of rejected tests of size $\alpha$ would be $\alpha$ by construction if $\psi$ was perfectly estimated. Table \ref{tabAdditive} reports a beneficial under-rejection rate in that case. Elsewhere in the parameter space, power to detect departures from the model increases to 1 with increasing $m$ except when $\varsigma=1$, where the true Weibull distribution collapses to the exponential and therefore is closer to the point at which the true and postulated models intersect. Both the true distribution and those in the postulated model are in this example constructed from a parameter space whose dimension diverges at the same rate as the sample size $m$, so both models are semiparametric.

We also assessed the approach based on \eqref{eqConfSet} without estimation of the notional parameter $\psi$, computing the proportion of confidence sets that were empty. If the parameter space of permissible values is constrained to some reasonable interval such as $[0,4]$, then such an approach tends to be more powerful for the sample sizes reported, but a larger range, say $[0,10]$ requires a much larger sample size in order for the confidence set to be empty with high probability; numerical results are not reported.

{\footnotesize{

		\begin{table}[h!]
			
			\begin{center}
				\begin{tabular}{|c|c|cccccc|}
					\hline
					Direction     &	\multirow{2}{*}{($\Delta^*$, $\varsigma$) }     &  \multicolumn{6}{|c|}{Square root of number of pairs $m$} \\
					\cline{3-8}
					of sensitivity &				& $5$ & $8$ & $11$ & $14$ & $17$ & $20$  \\
					\hline				
					left & (0, 0.5)   & 0.657 & 0.977 & 1 & 1 & 1 & 1  \\
					right & (0, 0.5)   & 0.640 & 0.983 & 1 & 1 & 1 & 1  \\
					\hline
					\textbf{left} & \textbf{(0, 1)}   & \textbf{0.002} & \textbf{0.001} & \textbf{0} & \textbf{0} & \textbf{0} & \textbf{0}  \\
					\textbf{right} & \textbf{(0, 1)}   & \textbf{0.001} & \textbf{0.001} & \textbf{0} & \textbf{0.001} & \textbf{0} & \textbf{0}  \\
					\hline
					left & (0, 2)   & 0 & 0.221 & 0.994 & 1 & 1 & 1  \\
					right & (0, 2)   & 0 & 0.232 & 0.997 & 1 & 1 & 1  \\
					left & (1, 0.5)   &  0.717 & 0.982 & 1 & 1 & 1 & 1  \\
					right & (1, 0.5)   & 0.685 & 0.990 & 1 & 1 & 1 & 1  \\
					left & (1, 1)   & 0.008 & 0.015 & 0.020 & 0.042 & 0.075 & 0.134  \\
					right & (1, 1)   & 0.002 & 0.003 & 0 & 0.005 & 0.007 & 0.010  \\
					left & (1, 2)   & 0 & 0 & 0.105 & 0.420 & 0.774 & 0.951  \\
					right & (1, 2)   & 0 & 0.005 & 0.647 & 0.992 & 1 & 1  \\
					left & (2, 0.5)   & 0.731 & 0.983 & 1 & 1 & 1 & 1  \\
					right & (2, 0.5)   & 0.696 & 0.992 & 1 & 1 & 1 & 1  \\
					left & (2, 1)   & 0.010 & 0.025 & 0.030 & 0.073 & 0.134 & 0.236 \\
					right & (2, 1)   & 0.002 & 0.003 & 0.002 & 0.006 & 0.012 & 0.023  \\
					left & (2, 2)  & 0 & 0 & 0.041 & 0.215 & 0.555 & 0.825  \\
					right & (2, 2)   & 0 & 0.004 & 0.482 & 0.979 & 1 & 1  \\
					\hline
				\end{tabular}
				\medskip
				\caption{Weibull generating distribution with additive treatment parameter $\Delta^*$ and shape $\varsigma$. The assumed model is exponential with multiplicative treatment parameter $\psi$. Proportion of 1000 simulation runs in which the 5\% test based on \eqref{eqHatVersion} rejects the postulated model. The point of model intersection $(\Delta^*,\varsigma)=(0,1)$ is highlighted. 	\label{tabAdditive}}
				\vspace{-0.5cm}
			\end{center}
			
		\end{table}
}}

{\footnotesize{

		\begin{table}[h!]
			
			\begin{center}
				\begin{tabular}{|c|c|cccccc|}
					\hline
					Direction     &	\multirow{2}{*}{($\psi^*$, $\varsigma$) }     &  \multicolumn{6}{|c|}{Square root of number of pairs $m$} \\
					\cline{3-8}
					of sensitivity &				& $5$ & $8$ & $11$ & $14$ & $17$ & $20$  \\
					\hline				
					left & (0.5, 0.5)   & 0.442 & 0.589 & 0.763 & 0.859 & 0.926 & 0.964 \\
					right & (0.5, 0.5)   & 0.846 & 0.995 & 1 & 1 & 1 & 1  \\
					left & (0.5, 1)   & 0.231 & 0.623 & 0.907 & 0.991 & 0.998 & 1 \\
					right & (0.5, 1)   & 0.239 & 0.669 & 0.942 & 0.998 & 1 & 1  \\
					left & (0.5, 2)   & 0.875 & 1 & 1 & 1 & 1 & 1 \\
					right & (0.5, 2)   & 0.002 & 0.035 & 0.182 & 0.536 & 0.817 & 0.939 \\
					left & (1, 0.5)   & 0.708 & 0.925 & 0.990 & 1 & 1 & 1  \\
					right & (1, 0.5)   & 0.565 & 0.881 & 0.984 & 1 & 1 & 1  \\
					\hline
					\textbf{left} & \textbf{(1, 1)}  & \textbf{0.079} & \textbf{0.068} & \textbf{0.082} & \textbf{0.059} & \textbf{0.070} & \textbf{0.058} \\
					\textbf{right} & \textbf{(1, 1)}   & \textbf{0.061} & \textbf{0.075} & \textbf{0.073} & \textbf{0.060} & \textbf{0.065} & \textbf{0.054} \\
					\hline
					left & (1, 2)   & 0.046 & 0.399 & 0.826 & 0.985 & 0.999 & 1 \\
					right & (1, 2)  & 0.073 & 0.450 & 0.894 & 0.991 & 1 & 1 \\
					left & (2, 0.5)  & 0.896 & 0.999 & 1 & 1 & 1 & 1  \\
					right & (2, 0.5)   & 0.329 & 0.466 & 0.606 & 0.740 & 0.873 & 0.936 \\
					left & (2, 1)   & 0.253 & 0.626 & 0.925 & 0.993 & 1 & 1 \\
					right & (2, 1)  & 0.252 & 0.620 & 0.922 & 0.991 & 1 & 1 \\
					left & (2, 2)  & 0.011 & 0.042 & 0.192 & 0.515 & 0.763 & 0.901 \\
					right & (2, 2)   & 0.889 & 1 & 1 & 1 & 1 & 1 \\
					\hline
				\end{tabular}
				\medskip
				\caption{Weibull generating distribution with multiplicative treatment prameter $\psi^*$ and shape $\varsigma$. The assumed model is exponential with additive treatment parameter $\Delta$. Proportion of 1000 simulation runs in which the 5\% test based on \eqref{eqHatVersion} rejects the postulated model. The point of model intersection $(\psi^*,\xi)=(1,1)$ is highlighted. 	\label{tabMultip}}
				\vspace{-0.5cm}
			\end{center}
			
		\end{table}
}}

Table \ref{tabMultip} reverses the roles of the two types of models, taking the baseline distribution of $Y_{j0}$ to be the same as before but obtaining the distribution of $Y_{j1}$ through multiplication of the rate parameter $\gamma_j$ by a value $\psi^*$. The postulated model, by contrast, is the exponential additive rates model of Example \ref{exampleAdditiveRate}. There is high power to detect the erroneous model, and at the point $(\psi^*, \varsigma)=(1,1)$ at which the true and postulated models coincide, the nominal level is recovered asymptotically.

\subsection{Time-dependent Poisson process}\label{secSimPoisson}

Simulation of an inhomogeneous Poisson process can be performed most simply using the argument of \citet[][pp.~27--28]{CL1966} that on the transformed time scale $\tau_i(t)=\int_0^t \lambda_i(u)du$ the process is Poisson of unit constant rate. It follows that the sequence  $T_{i1},\ldots, T_{im_i}$ of ordered  event times for individual $i$ satisfy $\tau_i(T_{i k_i})=\sum_{j=1}^{k_i} E_{ij}$, where $E_{ij}$ are independent unit exponential random variables. For $\lambda_i(t)=e^{\gamma_i + \beta t}$, we have $\tau_i(t)= e^{\gamma_i}(e^{\beta t}-1)/\beta$, and the event times are distributed as
\begin{equation}\label{eqSimT}
T_{ik_i}\stackrel{d}{=} \tau_i^{-1}\Bigl(\sum_{j=1}^{k_i} E_{ij}\Bigr) =\log\Bigl(1+(\beta/e^{\gamma_i}) \sum_{j=1}^{k_i} E_{ij}\Bigr)/\beta,
\end{equation}
the limit as $\beta\rightarrow 0$ being $T_{ik_i}=e^{-\gamma_i}\sum E_{ij}$ as expected. To check that the nominal error rates are recovered under the postulated model, we used this generating process with $\gamma_i$ drawn from a standard normal distribution, generating unit exponential random variables to use in \eqref{eqSimT} until their sum exceeded the endpoint of the transformed timescale $\tau_i(t_0)$  with $t_0=5$. Since we wish to assess cases where the model is misspecified, we also simulated using the power-law intensity function $\lambda_i(t)=e^{\gamma_i}t^\rho$ with $\rho\in\{-0.5, 0, 0.5, 1\}$. We generated
\begin{equation}\label{eqSimT2}
T_{ik_i}\stackrel{d}{=} \tau_i^{-1}\Bigl(\sum_{j=1}^{k_i} E_{ij}\Bigr) =\biggl(\frac{(\rho+1)\sum_{j=1}^{k_i} E_{ij}}{e^{\gamma_i}}\biggr)^{1/(\rho+1)}
\end{equation}
until the corresponding sum of unit exponentials exceeded  $\tau_i(t_0)=e^{\gamma_i}t_0^{\rho+1}/(\rho+1)$. At $\rho=\beta=0$ the true and postulated models intersect, but since the likelihood function \eqref{eqConditionalPDF} is indeterminate there, we do not necessarily expect to recover the nominal level. 

From $n$ individuals we estimated $\beta$ in the postulated model by maximising the conditional log-likelihood function
\begin{equation}\label{eqCondLL}
\ell(\beta) = \sum_{i=1}^n  m_i\log\Bigl(\frac{\beta}{e^{\beta t_0} - 1}\Bigr) + \beta \sum_{i=1}^n\sum_{j=1}^{m_i}t_{ij}
\end{equation}
based on \eqref{eqConditionalPDF}. The model was subsequently assessed using the statistic $-2\sum_{i=1}^n \log U_i(\hat\beta)$ where $U_i(\hat\beta)=F_{S_i|M_i}(S_i \cond m_i; \hat\beta)$ and the conditional distribution was approximated either through Monte Carlo sampling using 1000 replicates under the postulated model if $m_i$ was below 40, or by a normal approximation to the distribution of the sum under the postulated model if $m_i$ exceeded 40. An analogous statistic was computed with $\hat\beta$ replaced by its true value. The results are reported in Table \ref{tabCorrect} for data generated under the postulated model and in Table \ref{tabPoisson} for data generated under the power-law model.

Table \ref{tabCorrect} shows that the procedure rarely rejects the true model when the conditional maximum likelihood estimate is used in the construction of each $U_i(\hat\beta)$. When the true value of $\beta$ is used, it attains a rejection rate close to the nominal level, the discrepancy being primarily due to Monte Carlo sampling error in the approximation to $F_{S_i|M_i}(S_i \cond m_i; \hat\beta)$.

{\footnotesize{

		\begin{table}[h!]
			
			\begin{center}
				\begin{tabular}{|c|c|c|cccccc|}
					\hline
					Direction     &	\multirow{2}{*}{$\beta$} &  \multirow{2}{*}{estimated/true}  &  \multicolumn{6}{|c|}{Square root of number of individuals $n$} \\
					\cline{4-9}
					of sensitivity &		&		& $3$ & $4$ & $5$ & $6$ & $7$ & $8$  \\
					\hline				
					left & 0  & true & 0.05 & 0.08 & 0.02 & 0.07 & 0.05 & 0.06 \\
					right & 0 & true & 0.05 & 0.07 & 0.04 & 0.05 & 0.02 & 0.05 \\
					left & 0 & estimated & 0.03 & 0.04 & 0 & 0.03 & 0.02 & 0.05 \\
					right & 0 & estimated & 0.03 & 0.03 & 0.02 & 0.03 & 0.01 & 0.03 \\
					left & 1  & true & 0.02 & 0.07 & 0.06 & 0.04 & 0.07 & 0.09 \\
					right & 1 & true & 0.03 & 0.04 & 0.03 & 0.03 & 0.05 & 0.05 \\
					left & 1  & estimated & 0.01 & 0.01 & 0 & 0 & 0 & 0 \\
					right & 1 & estimated & 0 & 0 & 0.01 & 0 & 0 & 0 \\
					left & 2  & true & 0.06 & 0.03 & 0.06 & 0.05 & 0.06 & 0.05 \\
					right & 2 & true & 0.06 & 0.04 & 0.04 & 0.07 & 0.03 & 0.04 \\
					left & 2  & estimated & 0 & 0 & 0 & 0 & 0 & 0 \\
					right & 2 & estimated & 0.01 & 0 & 0 & 0.01 & 0 & 0 \\
					\hline
				\end{tabular}
				\medskip
				\caption{The generating process is the inhomogeneous Poisson process with intensity function $\lambda_i(t)=e^{\gamma_i + \beta t}$ as postulated.  Proportion of 200 simulation runs in which the 5\% test based on $F_{S_i|M_i}$ and \eqref{eqHatVersion} rejects the true model. 	\label{tabCorrect}}
				\vspace{-0.5cm}
			\end{center}
			
		\end{table}
}}

{\footnotesize{

		\begin{table}[h!]
			
			\begin{center}
				\begin{tabular}{|c|c|cccccc|}
					\hline
					Direction     &	\multirow{2}{*}{$\rho$}     &  \multicolumn{6}{|c|}{Square root of number of individuals $n$} \\
					\cline{3-8}
					of sensitivity &				& $3$ & $4$ & $5$ & $6$ & $7$ & $8$  \\
					\hline		
					\textbf{left} & \textbf{0} & \textbf{0.03} & \textbf{0.04} & \textbf{0} & \textbf{0.03} & \textbf{0.02} & \textbf{0.05} \\
					\textbf{right} & \textbf{0}  & \textbf{0.03} & \textbf{0.03} & \textbf{0.02} & \textbf{0.03} & \textbf{0.01} & \textbf{0.03} \\
					\hline
					left & 0.1   & 0.18 & 0.34 & 0.42 & 0.61 & 0.73 & 0.92 \\
					right & 0.1 & 0.11 & 0.17 & 0.23 & 0.33 & 0.47 & 0.51 \\
					left & 0.2   & 0.72 & 0.88 & 0.98 & 1 & 1 & 1 \\
					right & 0.2 & 0.66 & 0.81 & 0.95 & 0.99 & 1 & 1 \\
					\hline
				\end{tabular}
				\medskip
				\caption{The generating process is the inhomogeneous Poisson process with intensity function $\lambda_i(t)=e^{\gamma_i + \rho \log t}$; the postulated model has intensity function $\lambda_i(t)=e^{\gamma_i + \beta t}$. Proportion of 200 simulation runs in which the 5\% test based on $F_{S_i|M_i}$ and \eqref{eqHatVersion} rejects the postulated model. The point of model intersection $\rho=0$ is highlighted.	\label{tabPoisson}}
				\vspace{-0.5cm}
			\end{center}
			
		\end{table}
}}

 Table \ref{tabPoisson} shows high power to detect departures from the postulated model, even at values of $\rho$ very close to the point of intersection $\rho=0$.

 \subsection{Confidence sets of regression models}\label{secSimSynth}

This section presents numerical performance of the approach discussed in \S \ref{secSynthetic}. The original procedure of \citet{BRT} was intended for the situation in which the full set of variables contemplated is inordinately large, necessitating some preliminary reduction by  variable screening prior to assessment of low-dimensional subsets of variables. Since, in the present paper, we are illustrating a particular point, we simplify the problem by assuming that the number of starting variables is 15, so that no preliminary reduction is needed. If that were actually the case, there would be no advantage to using anything other than a likelihood-ratio test of every low-dimensional model against the comprehensive model, but it is reassuring to see coverage probabilities for the confidence sets based on $U_j=F_{Z}(Z_j)$ from \eqref{eqFisherDist1915}. These are reported as a function of the number of synthetic replicates $k$ and the sample size $n$. We also report the simulation-average size of the resulting confidence set of models, i.e.~the number of false models that are included in the confidence set on average over simulation runs.

The experiment was conducted as follows. In each of $500$ simulations, the $n$ rows of the $n\times d$ covariate matrix $X$ were drawn from a mean-zero normal distribution with correlation $0.9$ between $s+a$ of the $d$ variables, and correlation zero elsewhere, where $d=15$, $s=5$ is the number of signal variables and $a=3$ is the number of noise variables that are correlated with signal variables. Associated with $X$ is a vector $\theta$ of regression coefficients with entries 1 in the positions corresponding to the signal variables, and zeros elsewhere. The outcome vector was constructed as $Y=X\theta + \varepsilon$, where the entries of $\varepsilon$ were taken as standard normally distributed.

{\footnotesize{

		\begin{table}[h!]
			
			\begin{center}
				\begin{tabular}{|c|c|ccc|}
					\hline
					Direction    &	\multirow{2}{*}{($n$, $k$)}   &  \multirow{2}{*}{Coverage} & \multirow{2}{*}{$\hat{\EE}|\mathcal{M}\backslash \mathcal{S}|$} & \multirow{2}{*}{$\frac{\hat{\EE}(\text{\# false models})}{\text{\# models tested}}$} \\
					of sensitivity & & &  & \\
					\hline				
					left & (100, 4)   & 0.944  & 613 & 0.124 \\
					right & (100, 4)   & 0.954  & 603 & 0.122 \\
					left & (100, 8)   & 0.954  & 508 & 0.103 \\
					right & (100, 8)   & 0.962  & 504 & 0.102  \\
					left & (100, 12)   & 0.964  & 477 & 0.097 \\
					right & (100, 12)   & 0.956  & 476 & 0.096 \\
					left & (100, 16)   & 0.956  & 462 & 0.093 \\
					right & (100, 16)   & 0.968  & 460 & 0.093 \\
					left & (200, 4)   & 0.950  & 270 & 0.055 \\
					right & (200, 4)   & 0.954  & 250 & 0.051 \\
					left & (200, 8)   & 0.956  & 213 & 0.043 \\
					right & (200, 8)   & 0.958  & 206 & 0.041 \\
					left & (200, 12)   & 0.958  & 197 & 0.040 \\
					right & (200, 12)   & 0.958  & 192 & 0.039 \\
					left & (200, 16)   & 0.956  & 190 & 0.038 \\
					right & (200, 16)   & 0.950 & 186 & 0.038 \\
					\hline
				\end{tabular}
				\medskip
				\caption{Simulated coverage probability and expected size of the confidence sets of models from 500 simulations. These were constructed from \eqref{eqHatVersion} based on the distribution \eqref{eqFisherDist1915} of the cosine angles between projected synthetic replicates under each postulated sparse model. \label{tabCSM}}
				\vspace{-0.5cm}
			\end{center}
			
		\end{table}
}}

For every possible model of size $d_0\leq 5$, the corresponding columns $X_0$ of $X$ were extracted. For this postulated model we constructed a basis for the orthogonal projection onto the null space of $X_0$ by taking the $n-d_0$ eigenvectors of $I-X_0(X_0^\T X_0)^{-1}X_0$ corresponding to the unit eigenvalues. Let $V_0$ denote this $n\times (n-d_0)$ matrix of eigenvectors. From the vector of outcomes $Y$, $k\in\{4,8,12,16\}$  synthetic replicates $\tilde Y_1, \ldots, \tilde Y_k$ were generated according to equation (4.3) of \citet{BRT} and the corresponding co-sufficient replicates were constructed as $\tilde Q_j = V_0^\T \tilde Y_j/\|V_0^\T \tilde Y_j\|_2$. The statistics $Z_1,\ldots, Z_m$ were then computed as the $m=k(k-1)/2$ inner products $\langle \tilde Q_j, \tilde Q_i \rangle $, $j\neq i$. When the postulated model is true, these follow the distribution \eqref{eqFisherDist1915} and $U_{j}=F_{Z}(Z_j)$ has a standard uniform distribution. The rejection regions for the model were thus defined in the usual way via \eqref{eqHatVersion}. Table \ref{tabCSM} reports the resulting coverage probabilities of the $0.95$ nominal-coverage confidence sets and the average number of false models in the set from 500 simulations.

\section{Discussion and open problems}\label{secDiscussion}

Assessment of a statistical model for its compatibility with the data is inevitably a discrete problem unless competing models are nested, or can be artificially nested in an encompassing parametric model in such a way that the model space can be continuously traversed through variation of a parameter. By contrast, the simplest approaches to inference on the parameters of a given statistical model often involve maximisation of a log-likelihood or other objective function, and therefore typically invoke a notion of continuity on the parameter space. Perhaps for this reason, certain valuable inferential structures, such as interest-dependent co-sufficiency, have been overlooked in the literature, yet emerge naturally in the context of highly parametrised models with a low-dimensional interest parameter.

The work provides some new perspectives on the assessment of semiparametric and highly parametrised models, illustrating the possibility and value, in settings where the model admits this, of circumventing estimation of the infinite- or high-dimensional component. The broad idea is to use the postulated model to induce replication of known form if and only if the postulated model holds to an adequate order of approximation. Examples illustrate multiple routes to the inducement of replication. 

There are broad principles underpinning all of the examples presented, extracted in \S \ref{secFramework}, however the exact manner in which the replication is induced appears to be problem-specific. This raises the question of whether any general approaches to the inducement of internal replication might be formulated. We close by pointing to some underdeveloped ideas in this direction, for which a general resolution would constitute a major step forward.

In the absence of convenient structured replication of nuisance parameters, as might arise in blocked experiments or a longitudinal setting, sometimes allowing exact elimination of nuisance parameters as in \S \ref{secPoisson}, an important open question concerns the possibility of eliminating nuisance parameters approximately. A notion of approximate exchangeability was formulated by \citet{BJ2022} for a particular context, and this notion seems broadly appropriate across the range of settings considered here. An alternative might instead seek to collapse nuisance parameters into a low-dimensional summary that is both estimable and relatively insensitive to local perturbations under the postulated model.

\citet{BCL2024} attempted a constructive approach to finding transformations of observable random variables whose distributions are free or approximately free of nuisance parameters. In the context of matched pair examples, they framed known examples in which nuisance parameters are straightforwardly eliminated in terms integro-differential equations; these could be converted to standard types of partial differential equations and solved by established methods. The purpose of that work was not to solve easy examples via an unnecessarily complicated method, but to show how the results could be obtained through an application of general theory, in the hope that this might generalise. Such generalisation is a difficult open challenge. While \citet{BCL2024} sought nuisance-eliminating transformations by solving differential equations analytically, an alternative might be to seek those transformations numerically, in the vein of \citet{BoxCox}. 

Inducement of internal replication under a postulated model can be viewed as a form of inducement of population-level sparsity. \citet{Battey2023} framed four examples from this perspective that sought to achieve, through traversal of data-transformation space or of parametrisation space, a population-level sparsity that was not present in the initial formulation. The work cited in the previous paragraph is one example, another is parameter orthogonalisation \citep{CoxReid1987}. It is possible in view of this, although not obvious, that reparametrisation-based assessments of model adequacy might be available: if the model is violated, the reparametrisation fails to induce a population-level sparsity.

\begin{appendix}

\section{Proofs and derivations}

\subsection{Proof of Proposition \ref{propUhat}}\label{secProofUhat}

\begin{proof} Convergence in distribution of $U_j(\hat\psi)$ to $U_j(\psi^*_0)$ is metricised by both the Prohorov and bounded Lipschitz metrics, the latter defined for any two probability measures $P$ and $Q$ as
\[
\beta(P, Q):=\sup\Bigl\{\Bigl|\int f d(P-Q)\Bigr|: \|f\|_{\text{BL}} \leq M \Bigr\},
\]
for any finite $M$, where $\|f\|_{\text{BL}}:= \|f\|_{\text{L}} + f_{\infty}$ and $\|f\|_{\text{L}}:=\sup(x\neq y)|f(x)-f(y)|/d(x,y)$, where $d$ is a metric. By \citet[][Theorem 11.3.3]{Dudley2002}, $\beta(P,Q)\rightarrow 0$ if and only if $P\rightarrow Q$. Proposition \ref{propUhat} is thus established by showing that $\EE f(U_j(\hat\psi)) \rightarrow \EE f(U_j(\psi^*_0))$ for all bounded functions $f$ with bounded Lipschitz norm. Let $\mathcal{A}$ be the event $\{|\hat\psi - \psi^*_0| \leq \varepsilon \}$. For an arbitrary $\eta>0$, let $\varepsilon=\eta/2\|f\|_{\text{L}}$ and let $n_0$ be a sample size such that $\text{pr}(|\hat\psi - \psi^*_0|>\varepsilon)<\eta/4\|f\|_{\infty}$ for $n>n_0$. Thus, for $n$ exceeding $n_0$,
\begin{align*}
  |\EE f(U_j(\hat\psi)) -  \EE f(U_j(\psi^*_0)) |  
 \leq & \,  \bigl|\EE \bigl[ \bigl\{f(U_j(\hat\psi)) - f(U_j(\psi^*_0))\bigr\}  \, \ind(\mathcal{A})\bigr]\bigr| \\
 + & \, \bigl|\EE \bigl[ \bigl\{f(U_j(\hat\psi)) - f(U_j(\psi^*_0))\bigr\}  \, \ind(\mathcal{A}^c)\bigr]\bigr| \\
\leq & \, \|f\|_{\text{L}}\, \EE\bigl\{|\hat\psi - \psi^*_0| \; \ind(\{|\hat\psi - \psi^*_0|\leq \varepsilon\}) \bigr\} +2\|f\|_\infty \text{pr}(|\hat\psi - \psi^*_0|> \varepsilon) \\
\leq & \, \eta/2 + \eta/2 = \eta.
\end{align*}
Proposition \ref{propUhat} follows by the arbitrariness of $\eta$.
\end{proof}

\subsection{Derivation of equation \eqref{eqMarkov}}\label{secMarkov}

This is a standard argument but some care is needed with minus signs. Consider the moment generating function of $R$:
\begin{equation}\label{eqMGF}
M_R(s)=\EE(e^{s R}) 
= \prod_{j=1}^m\frac{\vartheta_j-1}{1+2s-\vartheta_j}, \quad \quad  (s > \vartheta_j -1 \;\; \forall j).
\end{equation}
Existence for $s > \vartheta_j -1$ for all $j=1,\ldots, m$ implies existence for all $s>\vartheta_{\max} - 1$. Thus, on letting $k_\alpha$ be the $1-\alpha$ quantile of the $\chi^2_{2m}$ distribution, Markov's inequality implies
\[
\text{pr}(R \geq k_{\alpha}) 
= 1- \text{pr}(-R\geq -k_{\alpha}) \geq 1- e^{t k_{\alpha}}\EE(e^{-t R}), \quad t>0.
\]
The expectation on the right hand side is the moment generating function with $s=-t$, the permissible range is thus $s_{\min} < s <0$, where $s_{\min}=\vartheta_{\max} - 1$, so that the tightest bound is, from \eqref{eqMGF},
\[
\text{pr}(R \geq k_{\alpha})\geq 
1- \inf_{s_{\min} < s <0} \biggl( e^{-s k_{\alpha}} \prod_{j=1}^m\frac{\vartheta_j-1}{1+2s-\vartheta_j} \biggr)
\]
which is equivalent to \eqref{eqMarkov}.

\subsection{Derivation of equation \eqref{eqCondDistPoisson}}\label{secLaplace}

The Laplace transform of \eqref{eqIndivContributins} is
\[
f^*_T(z)= \frac{\beta (e^{(\beta - z)t_0} - 1)}{(e^{\beta t_0}-1)(\beta - z)},
\]
thus the conditional density function of $S_i$ is, for any $\tau>\beta$, 
\begin{align*}
f_{S_i}(s \mid m_i ; \beta) = & \; \frac{\beta^{m_i}}{(e^{\beta t_0}-1)^{m_i}} \frac{1}{2\pi i}\int_{\tau - i\infty}^{\tau + i\infty} e^{z s}\frac{(e^{(\beta - z)t_0} - 1)^{m_i}}{(\beta - z)^{m_i}} dz \\
= & \; \frac{\beta^{m_i}}{(e^{\beta t_0}-1)^{m_i}} \sum_{v=0}^{m_i}{m_i\choose v} (-1)^{v} e^{v\beta t_0} 
\frac{1}{2\pi i}\int_{\tau - i\infty}^{\tau + i\infty} \frac{e^{sz} e^{-v z t_0} }{(z-\beta)^{m_{i}}} dz
\end{align*}
by the binomial formula. Let
\[
k^{*}(z)=\frac{e^{z(s-v t_0)} }{(z-\beta)^{m_{i}}}.
\]
Its contour integral is the residue at $z=\beta$ which, for a pole of order $m_i$, is 
\begin{align*}
\text{Res}(k^{*}(z), \beta) = &\; \frac{1}{(m_i - 1)!}\lim_{z\rightarrow \beta}\frac{d^{(m_{i}-1)}}{d z^{(m_{i}-1)}} \Bigl\{(z-\beta)^{m_i} k^*(z) \Bigr\} \\
= &\; \frac{1}{(m_i - 1)!}\lim_{z\rightarrow \beta}\frac{d^{(m_i -1)}}{d z^{(m_i-1)}}e^{z(s-v t_0)} \\
= &\; \frac{(s-v t_0)^{(m_i-1)}}{(m_i - 1)!}e^{\beta (s- vt_0)}.
\end{align*}
Thus,
\begin{align}
\nonumber f_{S_i}(s \mid m_i ; \beta) = & \;\frac{\beta^{m_i} e^{\beta s}}{(e^{\beta t_0}-1)^{m_i} \Gamma(m_i)} \sum_{v=0}^{m_i}{m_i\choose v} (-1)^{v}  (s-v t_0)^{(m_i-1)}\,\ind\{s> v t_0\},   \\
= & \;\frac{\beta^{m_i} e^{\beta s}}{(e^{\beta t_0}-1)^{m_i} \Gamma(m_i)} \sum_{v=0}^{\lfloor s/t_0\rfloor}{m_i\choose v} (-1)^{v}  (s-v t_0)^{(m_i-1)}, \quad s< m_i t_0
\end{align}
and zero otherwise. Integration gives
\[
F_{S_i|M_i}(s \mid m_i ; \beta) =  \frac{\beta^{m_i} }{(e^{\beta t_0}-1)^{m_i} \Gamma(m_i)} \sum_{v=0}^{\lfloor s/t_0\rfloor}{m_i\choose v}   (-1)^{v} e^{\beta v t_0} \int_0^{s-vt_0} e^{\beta w} w^{(m_i-1)}dw.
\]
This is equation \eqref{eqCondDistPoisson}.

\subsection{Proof of Proposition \ref{propPairs}}

\begin{proof}
	For notational convenience, introduce $V^{(k)}$ for the, perhaps notional, random variable relevant for exploiting a postulated model with Structure $k$. This is $Z$ and $Z(\psi_0)$ for $k=3$ and $k=4$ respectively and is the notional random variable $Y_\bullet(S)$ and $Y_\bullet(S(\psi_0))$ for $k=1$ and $k=2$. These are not functions of $S$ in the conventional sense, and are notional in the sense that they typically cannot be expressed in terms of the original random variables, but once $S=s$ or $S(\psi_0)=s(\psi_0)$ is observed, $Y_\bullet(S)$ and $Y_\bullet(S(\psi_0))$ have the conditional distribution of $Y_\bullet$ given $S=s$ or $S(\psi_0)=s(\psi_0)$. 
	
    Temporarily dropping superscripts, let $F_V(v;\psi_0)$ be the distribution function of $V$ under the assumption that the postulated model is true at $\psi_0$. Let $G_V^*(v)$ be the true distribution of $V$. Thus, if $G_V^*$ belongs to the family $\mathcal{F}_V:=\{F_V(\cdot ; \psi), \; \psi \in \Psi\}$, then there exists a $\psi^*$ such that $G_V^*(v)=F_V(v;\psi^*)$.
	
	One direction is by direct calculation: if $\psi_0=\psi^*$, then the distribution of $F_V(V;\psi_0)$ is uniform. For the converse direction, suppose for a contradiction that $F_{V}(V;\psi_0)$ is uniformly distributed but that $G_V^*(v)\neq F_V(v;\psi_0)$. By uniformity,
	\begin{equation}\label{eqUnif}
	u= \text{pr}\bigl(F_{V}(V; \psi_0) \leq u\bigr) 
	\end{equation}
	for all $u\in[0,1]$. First suppose that $G_V^*\in \mathcal{F}_V$. Since $F_{V}(v,\psi)$ is bijective in $v$ for any fixed $\psi$, there exists a $u\in[0,1]$ such that $F_{V}^{-1}(u; \psi_0)\neq F_{V}^{-1}(u; \psi_0')$ for any $\psi_0'\neq \psi_0$. Thus, for any such value of $u$, $F_{V}\bigl(F_{V}^{-1}(u; \psi_0); \psi^*\bigr)\neq F_{V}\bigl(F_{V}^{-1}(u; \psi^*); \psi^*\bigr)$, where the right hand side is equal to $u$ by definition. It follows that $F_{V}\bigl(F_{V}^{-1}(u; \psi_0); \psi^*\bigr) \neq u$, which contradicts \eqref{eqUnif}.
	
	Suppose now that uniformity is achieved and $G_V^*\notin \mathcal{F}_V$. Then $G_{V}^*(F_{V}^{-1}(u; \psi_0))$ replaces  $F_{V}\bigl(F_{V}^{-1}(u; \psi_0); \psi^*\bigr)$ in the argument of the previous paragraph. Equality for all $u$ can only be achieved at the points $\psi_0$ where $G_{V}^*(v)=F_{V}(v; \psi_0)$. But at such points $G_{V}^*$ belongs to $\mathcal{F}_V$, a contradiction.
	
	The remaining question is whether there are distributions over $(Y_0,Y_1)$ that induce the same distribution $G_V^*(v)=F_V(v;\psi_0)$ over $V$ that would hold if the postulated model for $(Y_0,Y_1)$ were true at $\psi_0$. The corresponding equivalence classes are most easily expressed in terms of density functions $g^*=g^*_{Y_0,Y_1}$, inducing a density function $g_V^*$ on $V$. These are the sets $\mathcal{E}^{(k)}(\psi_0)$ from Proposition \ref{propPairs}, made explicit in Appendix \ref{secEquiv}.
	\end{proof}

\subsection{Equivalence classes for Proposition \ref{propPairs}}\label{secEquiv}

Proposition \ref{propPairs} refers to equivalence classes of density functions $g$ over pairs $(Y_0, Y_1)$, within which the approach to achieving internal replication based on Structures \ref{structureSuff1}--\ref{structureTrasform2} has no power to reject the postulated model. In other words, if the true density function violates the postulated model in the directions of any model belonging to the corresponding equivalence class, then this will not be detected at any sample size. As in the statement of Proposition \ref{propPairs}, we drop the pair-index subscripts. With $|J|$ the absolute Jacobian determinant corresponding to the transformation from $(Y_0,Y_1)$ to $(S, Y_1)$ for $k=1$, to $(S(\psi_0), Y_1)$ for $k=2$, to $(Z, Y_1)$ for $k=3$ and to $(Z(\psi_0),Y_1)$ for $k=4$, the equivalence classes are:
\begin{align*}
\mathcal{E}^{(1)}(\psi_0)
= {} & \Bigl\{\, g :
\frac{
	g\bigl(y_0(s,y_1), y_1\bigr)\, |J|
}{
	\int_{\mathcal{Y}_1(s)} 
	g\bigl(y_0(s,y_1), y_1\bigr)\, |J| \, dy_1
}
= f_{Y_1 \mid S}(y_1 \mid s ; \psi_0)
\,\Bigr\}, \\[0.75em]
\mathcal{E}^{(2)}(\psi_0)
= {} & \Bigl\{\, g :
\frac{
	g\bigl(y_0(s(\psi_0), y_1), y_1\bigr)\, |J|
}{
	\int_{\mathcal{Y}_1(s(\psi_0))}
	g\bigl(y_0(s(\psi_0), y_1), y_1\bigr)\, |J| \, dy_1
}
= f_{Y_1 \mid S(\psi_0)}\bigl(y_1 \mid s(\psi_0)\bigr)
\,\Bigr\}, \\[0.75em]
\mathcal{E}^{(3)}(\psi_0)
= {} & \Bigl\{\, g :
\int_{\mathcal{Y}_1}
g\bigl(y_0(z, y_1), y_1\bigr)\, |J| \, dy_1
= f_Z(z ; \psi_0)
\,\Bigr\}, \\[0.75em]
\mathcal{E}^{(4)}(\psi_0)
= {} & \Bigl\{\, g :
\int_{\mathcal{Y}_1}
g\bigl(y_0(z(\psi_0), y_1), y_1\bigr)\, |J| \, dy_1
= f_{Z(\psi_0)}\bigl(z(\psi_0)\bigr)
\,\Bigr\}.
\end{align*}

The above equivalence relations are in terms of generic expressions and usually do not specify the simplest route to calculation in particular cases. For instance, the distribution of $Z_j$ from Example \ref{exIntroCont} can typically be calculated directly without explicit preliminary transformation from $(Y_0,Y_1)$ to $(Z,Y_1)$.

\subsection{Derivations for \S \ref{secPowerPairs}}\label{secPowerDerivations}

\subsubsection{Local consistency under the Weibull model}\label{secWeibullDeriv}

Maximisation of the log-likelihood function based on \eqref{eqDistExpMult} when the true distribution is given by \eqref{eqWeibullRatioDist} produces a maximum likelihood estimator $\hat\psi\rightarrow_p \psi_0^*$, where $\psi_0^*$ is the value of $\psi$ that maximises the expected log-likelihood function
\begin{equation}\label{eqKL}
\psi_0^* = \argmax_{\psi\in \RR^+} \biggl(\log\psi - 2\int_{0}^{\infty}\log(1+ \psi z)\frac{\psi^*\varsigma z^{\varsigma-1}dz}{(1+ \psi^* z^{\varsigma})^2} \biggr).
\end{equation}
This solution satisfies
\begin{equation}\label{eqSolution}
\frac{1}{\psi_0^*}=2\psi^*\varsigma\int_{0}^{\infty}\frac{ z^{\varsigma}dz}{(1+ \psi^* z^{\varsigma})^2(1+\psi_0^* z)}.
\end{equation}
Since we are interested in the behaviour near the point of intersection of the two models $\varsigma=1$, write $\varsigma=1+\varepsilon$ and expand the integral for small $\varepsilon$, giving
\[
\frac{1}{\psi_0^*}=2\psi^*(1+\varepsilon)\biggl(\int_{0}^{\infty}\frac{ z dz}{(1+ \psi^* z)^2(1+\psi_0^* z)}+O(\varepsilon)\biggr)
\]
On expanding the integral in partial fractions it follows that, to first order in $\varepsilon$, $\psi_0^*$ solves
\[
\frac{\psi^*}{\psi_0^*}=2\psi^*(1+\varepsilon)\biggl(\frac{\psi^*\log(\psi^*/\psi_0^*)+\psi_0^*-\psi^*}{(\psi^*-\psi_0^*)^2}\biggr).
\]
Equivalently, $x=\psi^*/\psi_0^*$ solves
\begin{equation}\label{eqSolveX}
(x-1)^2=2(1+\varepsilon)x(\log x + 1/x - 1).
\end{equation}
The left and right hands sides are both convex with a unique minimum at $x=1$ for any $\varepsilon$. Thus, to first order in $\varepsilon$, the maximum likelihood solution $\psi_0^*$ under the postulated model is equal to the true value $\psi^*$.

\subsubsection{Local asymptotic analysis for the additive exponential model}\label{secAdditiveDeriv}

The limiting maximum likelihood solution is in this case more complicated in view of the nuisance parameters, but by the law of large numbers the average of the log-likelihood contributions converges to the average of the expected log-likelihood contributions, thus
\[
\psi_0^* = \lim_{m\rightarrow \infty}\argmax_{\psi\in \RR^+} \biggl(\log\psi - \frac{2}{m}\sum_{j=1}^m\gamma_j(\gamma_j + \Delta^*)\int_{0}^{\infty}\frac{\log(1+ \psi z) dz}{(\gamma_j+ (\gamma_j + \Delta^*)z)^2} \biggr).
\]
Differentiation shows that $\psi_0^*$ solves
\[
\frac{1}{\psi_0^*} = \lim_{m\rightarrow \infty} \frac{2}{m}\sum_{j=1}^m\gamma_j(\gamma_j + \Delta^*)\int_{0}^{\infty}\frac{z dz}{(1+\psi_0^* z)(\gamma_j + (\gamma_j+\Delta^*)z)^2}.
\]
Analogously to Example \ref{exampleTrueWeibull} we can expand the integrand around $\Delta^*=0$, which corresponds to the point at which the two models intersect, giving, to first order
\[
\frac{1}{\psi_0^*} = \biggl(\frac{\psi_0^* - 1 - \log \psi_0^*}{(\psi_0^* - 1)^2} + O(\Delta^*)\biggr)\lim_{m\rightarrow \infty} \frac{2}{m}\sum_{j=1}^m\frac{(\gamma_j + \Delta^*)}{\gamma_j}.
\]
To the same order, this is of almost identical form to \eqref{eqSolveX} with the term $1+\varepsilon$ replaced by the limiting average $a$ of $(\gamma_j + \Delta^*)/\gamma_j$. The first-order solution under the notional asymptotic regime $\Delta^*\rightarrow 0$ is thus the logical one $\psi_0^*=1$ for all $a$, $\psi^*=1$ and $\Delta^*=0$ corresponding to the null treatment effect and the point at which the two models intersect. Thus, intuitively, there is little sensitivity to departures from the postulated multiplicative model in the direction of the additive model at small $\Delta^*$.

\end{appendix}

\bibliographystyle{plainnat}

\end{document}